\newtheorem{Theorem}{Theorem}[section]
\newtheorem{Lemma}{Lemma}[section]
\def\ee{\end{eqnarray*}}
\def\be{\begin{eqnarray*}}
\def\bee{\end{eqnarray}}
\def\bbe{\begin{eqnarray}}
\def\ea{\end{align*}}
\def\ba{\begin{align*}}
\def\baa{\end{align}}
\def\bba{\begin{align}}
\def\u{\bm u}
\def\f{\bm f}
    \def\p{\partial}
    \def\i{\mathrm i}
\let\today\relax
\def\ps@pprintTitle{%
    \let\@oddhead\@empty
    \let\@evenhead\@empty
    \def\@oddfoot{\footnotesize\itshape
      \hfill\today}
    \let\@evenfoot\@oddfoot
    }
\begin{document}

\title{Hopf bifurcation of a non-parallel Navier-Stokes flow }

\author{Zhi-Min Chen }

 \ead{zmchen@szu.edu.cn}
\address{School  of Mathematics and Statistics, Shenzhen University, Shenzhen 518060,  China}%


\begin{abstract}

A plane non-parallel flow in a square fluid domain exhibits an odd number of vortices.  A spectral  structure is found to have a non-real  solution  of the spectral problem linearized around the flow. With the use of this structure, Hopf bifurcation or  secondary time periodic flows branching of a basic square eddy flow is found. In contrast to a square eddy flow involving an even number of vortices in earlier analytical and experimental investigations, instability of the flow leads to  steady-state bifurcations.

\begin{keyword} Hopf bifurcation, non-parallel flows, Navier-Stokes equations,  square eddy flows, bifurcating time periodic flows

 {\it 2020 MSC:} 35B32, 35Q30,76D05,76E09,76E25
 \end{keyword}
\end{abstract}

\maketitle

\section{Introduction}

 Transitions of a basic flow into steady-state flows and time periodic flows are inherent nonlinear phenomena of  viscous incompressible fluid motions governed by  Navier-Stokes equations. The study  of these phenomena is  the first stage for the understanding of the instability nature of     Navier-Stokes  flows.

 Some explicit sinusoidal Navier-Stokes models  have  been found to have the secondary time periodic flows.  For example,   Chen and Price \cite{Chen1999} and Chen et al. \cite{Chen2003} employed Fourier expansion showing the existence of time periodic bifurcating flows form  the parallel  Kolmogorov flow \cite{AM},  whose physical motion and instability were measured by Bondarenko {\it et al.} \cite{Bon79} in a laboratory experiment by taking a Hartmann layer friction into consideration.

In the present study,  we  consider  the existence of time periodic flows bifurcating from   the non-parallel square eddy  flow
 \be (-\sin x\cos y, \cos x\sin y)
   \ee
in the fluid domain  $\Omega =(0,N\pi)\times (0,N\pi)$ for an odd integer  $N\ge 1$.   When $N=1$, the flow is globally nonlinear stable \cite{Chen2019}. This rules out the existence of secondary flows. When the integer $N=3$, the occurrence of secondary time periodic flows were confirmed by numerical computations \cite{ChenPrice23}. When $n\ge 5$,  we shall  show the existence of time periodic flows branching off the basic square eddy flow in the direction of its critical spectral solutions.

Consider the two-dimensional fluid motion governed by the incompressible Navier-Stokes equations
\bbe\label{o1}
\frac{\p \u }{\p t}+ \u \cdot \nabla \u -\nu \Delta \u +\mu \u +\nabla p= \f,\,\,\,\nabla \cdot \u=0.
\bee
Here  $\nu>0$ is viscosity,  $\mu$ is an energy dissipation coefficient with respect to a   horizonal friction effect,      $\u$ is the  velocity and $p$ is the pressure of the fluid.
The vorticity formulation of (\ref{o1}) driven by the vortical forcing
$$ \f = (2\nu +\mu)( -\sin x\cos y, \cos x \sin y)$$
 can be expressed as (See \cite{Chen2021,Thess})
\bbe\label{oo1}
-\frac{\p \Delta \psi}{\p t }+ J(\psi,\Delta \psi)+ \nu \Delta^2 \psi-\mu \Delta \psi =(4\nu+2\mu)\sin x \sin y
\bee
for  the Jacobian $J(\varphi,\phi)=\p_y(\phi\p_x \varphi)-\p_x(\phi\p_y\varphi)$. Here   $\psi$ is the stream function subject to the equation
\be \u =\nabla\times \psi= \left(\frac{\partial \psi}{\partial y},-\frac{\p \psi}{\p x}\right).
\ee
 The boundary condition of the fluid motion is
\bbe \label{oo2} \psi|_{\p\Omega} =\Delta \psi|_{\p\Omega} =0.
\bee
The basic steady-state solution  of (\ref{oo1})-(\ref{oo2}) is the square eddy  flow
\be \label{oo4}\psi_0 = \sin x\sin y\ee
in the stream function formulation.

By taking the horizonal bottom Hartmann layer friction into consideration,
this  two-dimensional problem  is  approximated by a  thin layer three-dimensional magnetohydrodynamical fluid motion problem via the laboratory experiments of  Sommeria and  Verron \cite{Sommeria84} and Sommeria \cite{Sommeria86}, showing the presence of  secondary steady-state bifurcation flows. The equivalent formulation  of the experimental fluid motion model \cite{Sommeria86,Sommeria84} and (\ref{oo1})-(\ref{oo2})  is shown in  \cite{Chen2019}. The steady-state bifurcation theory in relating to the experimental steady-state bifurcation phenomena \cite{Sommeria86,Sommeria84}   for $N=2$  was obtained  by the author \cite{Chen2021}. The existence of  secondary steady-state flows in  rectangular  domains in response to the experimental secondary steady-state flows of  Tabeling {\it et al.} \cite{T1,T2} was presented recently by the author \cite{Chen2022}.

However, for the square eddy flow with an odd number of vortices, the instability of the square eddy flow behaviours in a different manner,  due to the presence of secondary time periodic flows.    A spectral structure with respect to this flow  is found  to have a complex spectral behaviour leading to the existence of  the secondary time periodic flows.

 Linearizing (\ref{oo1})-(\ref{oo2}) by employing the perturbation
\be \psi =\psi_0 + e^{\lambda t} \psi'\ee
and omitting the superscript primes, we have the spectral problem
\bbe\label{spp} 0= -\lambda \Delta \psi +\nu \Delta^2 \psi -\mu \Delta \psi + L\psi\,\,\mbox{ for }\,\,\, L\psi= J( \psi_0, (\Delta +2)\psi),\bee
where the eigenfunction $\psi$ has  the Fourier expansion
 \bbe \psi = \sum_{n,m\ge 1} a_{n,m}\sin \frac{ nx}N\sin\frac{my}N\ne 0.\label{sppp}
\bee
Equivalently, the expansion coefficients $a_{n,m}$ are subject to  the algebraic equation \cite{Chen2019}
\begin{align}
0
=&\sum_{n,m\ge -N} \sin \frac{nx}{N}\sin \frac{my}{N}\Big\{4N (\lambda\beta_{n,m}+\mu\beta_{n,m} +\nu\beta^2_{n,m})a_{n,m}\nonumber
\\
&+(n-m)[(\beta_{n-N,m-N}-2)a_{n-N,m-N} \nonumber
-(\beta_{n+N,m+N}-2)a_{n+N,m+N}]\nonumber
\\&+(n+m)[(\beta_{n-N,m+N}-2)a_{n-N,m+N}
- (\beta_{n+N,m-N}-2)a_{n+N,m-N} ]\Big\}\label{alg}
\end{align}
for $\beta_{n,m}= \frac{n^2+m^2}{N^2}$ and  $a_{n,m}=0$ whenever $n\le 0$ or $m\le 0$.

Due to the non-parallel flow property of $\psi_0$ running in $x$ and $y$ directions, the  expansion coefficient $a_{n,m}$ is effected initially by the following four influence  coefficients
\be a_{n-N,m-N},\,\,a_{n-N,m+N},\,\,a_{n+N,m-N},\,\,a_{n+N,m+N}.\ee

When $N=2$, there is an  eigenfunction  balance between  $a_{n,m}$ and its initial influence  $a_{n\pm N,m\pm N}$ and $a_{n\pm N,m\mp N}$ for $(n,m)$ having the same  parity to $(n\pm N,m\pm N)$ and  $(n\pm N,m\mp N)$. Thus  a real and simple critical spectral solution was  found in a suitable function space, which is flow invariant  with respect to the associated nonlinear equation, and hence the existence of steady-state bifurcations  was obtained \cite{Chen2019,Chen2021}. This steady-state bifurcation analysis can be extended to an even number $N>2$.
 In contrast, for $N\ge 3$  odd,  there arises a different balance with   the indices  in the opposite parity.    This gives rise to the occurrence of   non-real critical eigenfunctions and  Hopf bifurcation, confirmed numerically \cite{ChenPrice23} at $N=3$. However,  this opposite parity property leads to the absence  of  a  nonlinear flow invariant subspace, in which the simplicity condition holds true.

As shown in \cite{Chen2019},  we may find  eigenfunctions as Fourier expansions   generated respectively by the modes $\sin \frac{n_0x}{N}\sin \frac{m_0y}{N}$ with
\bbe \beta_{n_0,m_0}=\frac{n_0^2+m_0^2}{N^2}<2. \label{n0m0}
\bee
To seek  a suitable eigenfunction space containing non-real critical eigenfunctions for $N\ge 3$ odd, we assume that the positive  index  $(n_0,m_0)$ is  subject to the condition
\bbe N=n_0+m_0 \,\,\, \mbox{ and } m_0=n_0+1. \label{NNN}\bee
Therefore,  the eigenfunction expansion coefficient $a_{n_0,m_0}$ has the initial influence coefficients
\bbe a_{-n_0+N, -m_0+N},  \,\,\, a_{-n_0+N, m_0+N}, \,\,\, a_{n_0+N, -m_0+N}, \,\,\, a_{n_0+N, m_0+N}
\bee
To check the validity of (\ref{n0m0}) for these indices, by (\ref{NNN}), we see that
\be
&&\beta_{-n_0+N,-m_0+N}<2,
\\
&&\beta_{-n_0+N,m_0+N} =\frac{(N+1)^2+(3N+1)^2}{4N^2}>2,
\\
&&\beta_{-m_0+N,n_0+N}= \frac{(N-1)^2+(3N-1)^2}{4N^2}>2, \mbox{ if } N\ge 5,
\\
&&\beta_{n_0+N,m_0+N}>2.
\ee
Hence, for $N>5$ shown in (\ref{NNN}), it follows from  (\ref{alg})  that  we have the eigenfunction $\psi$  generated exactly by the two modes
\bbe \label{ss}\sin \frac{n_0x}{N}\sin \frac{m_0y}{N} \,\,\mbox{ and }\,\, \sin \frac{m_0x}{N}\sin \frac{n_0y}{N},\bee
 with the integer vectors $(n_0, m_0)$  subject to (\ref{n0m0}).
Therefore,  for $N\ge 5$, the Fourier expansion of an eigenfunction generated by the modes (\ref{ss}) can be expressed  as
\bbe \psi &=&a_{n_0,n_0+1}\sin \frac{ n_0x}N\sin\frac{(n_0+1)y}N+a_{n_0+1,n_0}\sin \frac{ (n_0+1)x}N\sin\frac{n_0y}N\nonumber
 \\
 &&+ \sum_{n, m\ge 1,\, \beta_{n,m}>2} a_{n,m}\sin \frac{ nx}N\sin\frac{my}N  \label{sppp1}
\bee
for  the influence coefficients $a_{n,m}$  subject to the following possible form
\bbe  a_{n_0+jN,m_0+kN}, a_{m_0+jN,n_0+kN}, a_{n_0+jN,n_0+kN}, a_{m_0+jN,m_0+kN}\label{in} \bee
for  integers $j, k\ge 0$. It is readily seen that  $\beta_{n,m}\ne 2$ for the  indices $(n,m)$  displayed in (\ref{in}).

By a numerical computation,
  we obtained  the critical eigenfunction given by  (\ref{sppp1})  associated with  a pure imaginary  eigenvalue $\lambda\ne 0 $   and found that   the expansion coefficients  are subject to the condition
 \bbe
 a_{n,m} = -\i a_{m,n}, \,\, \mbox{ whenever  $n$ is even and $m$ is odd}.\label{spp10}
 \bee
 Note that  $n_0$ and $m_0=n_0+1$ have opposite parity. This leads to the opposite parity  of the indices $n$ and $m$ in (\ref{sppp1}). We thus  consider  eigenfunctions in the following  space
  \begin{align}
E_N= &\Big\{ \psi \in H^4 \, |\,\, \psi \mbox{ is expanded as in  (\ref{sppp1}), (\ref{NNN}) and (\ref{spp10}) so that the
}\nonumber
\\ &\mbox{  influence coefficients indices  $n$ and  $m$ have opposite parity } \Big\},\label{EN}
\end{align}  
which is an invariant eigenfunction space of the spectral operator determined  on the right-hand side of (\ref{spp}).  Here
 $H^4$  is the Hilbert  space of functions  subject to the boundary condition (\ref{oo2}) and equipped with  the  norm $\|\Delta^2 \psi\|_{L_2(\Omega)}$.

To show the existence of periodic solutions of frequency $\omega$ of (\ref{oo1}),  we put $s=\omega t $ to rewrite  (\ref{oo1}) as
\bbe\label{aabb}
-\omega\frac{ \p  \psi}{\p s }+ \nu \Delta \psi-\mu \psi +\Delta^{-1}L\psi+  \Delta^{-1}J(\psi,\Delta \psi)=0.
\bee
Consider solutions  in the parabolic H\"older space $C^{2k+2\alpha,k+\alpha}$, the Banach space of functions subject to (\ref{oo2}) and equipped  with the H\"older norm $\|\cdot\|_{2k+2\alpha,k+\alpha}$ on the space time domain $\Omega \times [0,2\pi]$ (See \cite{Lady,Sat}).
Moreover, we assume
$C^{2k+2\alpha,k+\alpha}_{2\pi}$ the complete subspace of the functions   satisfying the $2\pi$-periodic condition in time.
 $L_2$ inner products  are in the following sense.
\be   (\phi,\varphi) = \frac4{N^2\pi^2}\int^{N\pi}_0\int^{N\pi}_0 \phi \bar\varphi dx dy \,\,\,\mbox{ and } \,\, \langle  \phi, \varphi\rangle  = \frac{1}{2\pi} \int^{2\pi}_0 (\phi,\varphi)ds.
\ee

   We are   now in the position to state the main result of the present study.

   \begin{Theorem} \label{th1} Let  $0<\alpha <\frac12$ and  $N\ge 5$ be an odd integer. Assume that the spectral problem  (\ref{spp})-(\ref{sppp})  has  a critical spectral solution
   \be (\lambda, \nu, \mu, \psi)=(\i \omega_c, \nu_c, \mu_c,  \psi_c) \,\,\mbox{ with }\,\, \nu_c>0,\, \mu_c \ge 0\ee
   with $\psi_c \in E_N$.
   Here  the real $\omega_c\ne 0$ is unique with respect to $(\nu_c,\mu_c)$.
Let $P$  be the projection operator   in the following sense
\bbe P \psi =\psi-  \langle \psi,e^{-\i s}\Delta\psi_c^*\rangle e^{-\i s} \psi_c-\langle \psi,e^{\i s}\Delta \overline{ \psi_c^*}\rangle  e^{\i s} \bar\psi_c\bee
where $\psi_c^*$ is the critical conjugate eigenfunction  defined by the conjugate eigenfunction problem
\bbe\label{cenn} 0= \i \omega_c  \Delta \psi_c^* +\nu_c \Delta^2 \psi_c^* -\mu_c \Delta\psi_c^* +L^*\psi^*_c
\bee
with $L^*$ defined through the $L_2$ conjugation
\bbe\label{psii} (L\phi,\varphi) =(\phi, L^*\varphi) \,\,\mbox{ for }\,\, \phi, \varphi \in H^4. \bee
If
\bbe \dim\Big\{ \psi \in H^4 \,|\, 0=( -\i \omega_c\Delta +\nu_c\Delta^2 -\mu_c\Delta + L) \psi \Big\} =1, \label{oneo}\bee
then for any $\epsilon\ne 0$ sufficiently small, equations (\ref{oo2}) and  (\ref{aabb}) admit   a unique  element   \be (\omega, \sigma, \phi) \in (-\infty,\infty)\times (-\infty, \infty)\times  PC^{2+2\alpha,1+\alpha}_{2\pi},\ee
 dependant smoothly on $\epsilon$, in a neighborhood of $(\omega_c,1, 0)$ so that
\be\psi=\epsilon (e^{-\i s} \psi_c+e^{\i s} \bar\psi_c + \phi) \,\,\mbox{ and } \,\, (\omega,\mu,\nu)=(\omega, \sigma \mu_c,\sigma \nu_c)
\ee  solve (\ref{oo2}) and (\ref{aabb}).

\end{Theorem}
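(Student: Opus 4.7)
The plan is to carry out a Lyapunov--Schmidt reduction that combines a parabolic H\"older implicit function theorem for the infinite dimensional component with a finite dimensional one for the two scalar parameters. Substituting the ansatz $\psi=\epsilon(e^{-\i s}\psi_c+e^{\i s}\bar\psi_c+\phi)$ together with $(\mu,\nu)=(\sigma\mu_c,\sigma\nu_c)$ into (\ref{aabb}) and dividing by $\epsilon$ yields a smooth map
\[
F(\omega,\sigma,\phi,\epsilon):\mathbb{R}\times\mathbb{R}\times C^{2+2\alpha,1+\alpha}_{2\pi}\times\mathbb{R}\;\longrightarrow\;C^{2\alpha,\alpha}_{2\pi},
\]
linear in $\psi/\epsilon$ plus an $O(\epsilon)$ quadratic remainder coming from $\Delta^{-1}J(\cdot,\Delta\cdot)$. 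The spectral identity (\ref{spp}) for $\psi_c$ and its complex conjugate force $F(\omega_c,1,0,0)=0$, so $(\omega_c,1,0,0)$ is the base point of the bifurcation.

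I would next invert the Fr\'echet derivative $\mathcal{M}\phi=-\omega_c\p_s\phi+\nu_c\Delta\phi-\mu_c\phi+\Delta^{-1}L\phi$. Expanding $\phi=\sum_{k}e^{\i ks}\phi_k$ decouples $\mathcal{M}$ into the spatial operators $-\i k\omega_c+\nu_c\Delta-\mu_c+\Delta^{-1}L$, which upon applying $\Delta$ are exactly the spectral operators of (\ref{spp}) at $\lambda=\i k\omega_c$. The stated uniqueness of $\omega_c$ renders them invertible for $|k|\ne1$, while (\ref{oneo}) and the adjoint problem (\ref{cenn}) give
\[
\ker\mathcal{M}=\mathrm{span}\{e^{-\i s}\psi_c,e^{\i s}\bar\psi_c\},\qquad \mathrm{coker}\,\mathcal{M}=\mathrm{span}\{e^{-\i s}\psi_c^*,e^{\i s}\overline{\psi_c^*}\}.
\]
Since $P$ annihilates the kernel directions through the biorthogonal pairing with the cokernel, parabolic Schauder estimates applied modewise in $s$ lift the formal inverse to a bounded map, so that $\mathcal{M}:PC^{2+2\alpha,1+\alpha}_{2\pi}\to PC^{2\alpha,\alpha}_{2\pi}$ is an isomorphism. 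The implicit function theorem then produces a smooth $\phi=\phi(\omega,\sigma,\epsilon)$ with $\phi(\omega_c,1,0)=0$ solving $PF=0$.

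It remains to solve the two real bifurcation equations obtained by pairing $(I-P)F=0$ with $e^{-\i s}\Delta\psi_c^*$ and its conjugate. They are satisfied at the base point, and the derivatives $\p_\omega F|_0=\i e^{-\i s}\psi_c-\i e^{\i s}\bar\psi_c$ and $\p_\sigma F|_0=e^{-\i s}(\nu_c\Delta-\mu_c)\psi_c+\mbox{c.c.}$ together with (\ref{cenn}) reduce the $2\times2$ Jacobian in $(\omega,\sigma)$ to a non-degenerate real expression; its non-vanishing is precisely the Hopf transversality condition along the ray $(\mu,\nu)=(\sigma\mu_c,\sigma\nu_c)$, which is guaranteed by the uniqueness of the frequency $\omega_c$ stated in the theorem. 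A second application of the implicit function theorem then solves for $(\omega(\epsilon),\sigma(\epsilon))$ smooth in $\epsilon$ with $(\omega(0),\sigma(0))=(\omega_c,1)$, completing the proof.

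The main obstacle I expect is the Fredholm/isomorphism analysis of $\mathcal{M}$ in the second paragraph: although the Fourier-in-$s$ decoupling is transparent, the formal modewise inverse must be assembled into a bounded operator with H\"older regularity in $s$ uniform across all Fourier modes, which is where the parabolic H\"older framework (rather than $L_2$) is essential, and where the uniqueness hypothesis on $\omega_c$ must be used to rule out accidental imaginary spectral values $\i k\omega_c$ at $(\nu_c,\mu_c)$. The closing transversality computation is shorter but draws essentially on the explicit two-mode structure of $\psi_c\in E_N$ established earlier in the paper.
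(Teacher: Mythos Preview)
Your Lyapunov--Schmidt scheme is essentially the same architecture the paper uses: substitute the ansatz, invert the linearised operator $\mathcal{M}$ on $PC^{2+2\alpha,1+\alpha}_{2\pi}$ (this is the paper's Lemma~\ref{ll}, quoted from Joseph--Sattinger rather than reproved via Fourier-in-$s$), and then close two scalar equations for $(\omega,\sigma)$. The paper runs the last two steps as a single Banach contraction rather than two implicit function theorem applications, but that is a cosmetic difference.

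There is, however, a genuine gap in your closing paragraph. You assert that the non-degeneracy of the $2\times2$ Jacobian in $(\omega,\sigma)$ ``is guaranteed by the uniqueness of the frequency $\omega_c$ stated in the theorem.'' It is not. Uniqueness of $\omega_c$ only says that $\i\omega_c$ is the sole purely imaginary eigenvalue at $(\nu_c,\mu_c)$; it says nothing about the \emph{speed} at which that eigenvalue crosses the imaginary axis as $\sigma$ varies. What you actually need are the two conditions
\[
(\psi_c,\Delta\psi_c^*)\ne 0
\qquad\text{and}\qquad
\Re\,\frac{(-\mu_c\psi_c+\nu_c\Delta\psi_c,\Delta\psi_c^*)}{(\psi_c,\Delta\psi_c^*)}\ne 0,
\]
and neither follows from the hypotheses of Theorem~\ref{th1} by soft reasoning. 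In the paper this is the content of Theorem~\ref{th2}, whose proof occupies all of Section~3: it combines the explicit conjugate formula $a_{n,m}^*=(-1)^m(-\beta_{n,m}+2)\bar a_{n,m}$ from Lemma~\ref{ll2}, the parity relation $a_{n,m}=-\i a_{m,n}$ built into $E_N$, and the energy identity of Lemma~\ref{ll1} to squeeze out strict inequalities. This is not the ``shorter'' step; it is where the structure of $E_N$ is actually used, and without it the bifurcation equations could be degenerate. Your sketch should make the dependence on Theorem~\ref{th2} explicit rather than attributing transversality to the uniqueness of $\omega_c$.
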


In comparison with parallel Kolmogorov flow $\sin y$ running in the single $y$ direction, the Fourier expansion coefficient $b_n$ of the  corresponding eigenfunction  is effected initially  by two influence coefficients
denoted as $b_{n-N}$ and $b_{n+N}$. Therefore,  a continued fraction method \cite{S} was employed to show the existence of critical values leading to the occurrence  of Hopf bifurcation of Navier-Stokes flows \cite{Chen1999,Chen2003}. However, in the present non-parallel case, the existence of critical values remains open due to a  lack of suitable  rigorous analysis, although numerical critical values are available \cite{Chen2021,ChenPrice23}.

A  basic flow loses stability and gives rise  to a primary bifurcation, which   is essentially determined by  the spectral behaviour of a  linearized  equation around the  basic flow.
Steady-state bifurcation  motion arises   in the direction of a single real critical eigenfunction, while  Hopf bifurcation occurs in the direction of  a pair of conjugate critical eigenfunctions.
Therefore, in the viewpoint of rigorous analysis, the former is much easier to be observed in contrast  with the later \cite{K}.

In the pioneer work of Hopf \cite{H}, an eigenfunction simplicity condition and an eigenfunction transversal crossing condition was found to be  sufficient to ensure the existence of  Hopf bifurcation  for  a finite dimensional system.
The extension of this  result to  Navier-Stokes equations   was given by Joseph and Sattinger \cite{JS}.
However, as  emphasized by Kirchg\"assner \cite{K} that   it is very  difficult to find a concrete Navier-Stokes motion model displaying the validity of the  two spectral conditions.  Positive answers  to this validity was obtained in   \cite{Chen1999,Chen2003} for  parallel  Kolmogorov  related flows, whereas  there is little literature available  for  non-parallel flows. Amongst some assumptions,  Galdi \cite{G} studied a non-parallel flow  passing a cylinder and obtained   the existence of Hopf bifurcation by assuming  the transversal crossing condition.

For the present non-parallel square vortex flow, an attempt has been  made to show the validity of the transversal crossing condition in Section 3 by following the study of Kolmogorov flows \cite{Chen1999,Chen2003}.
 For the eigenfunction simplicity condition, we adopt the eigenfunction  space $E_N$, which is  invariant with respect to  the linear problem (\ref{spp}) and
\bbe \dim\Big\{ \psi \in E_N \,|\, 0=( -\i \omega_c\Delta +\nu_c\Delta^2 -\mu_c\Delta + L) \psi \Big\} \le 1. \label{oneoo}\bee
Indeed, assuming  the existence of two eigenfunctions $\psi,\,\psi'\in E_N$ with the  same coefficients $a_{n_0,n_0+1} = a'_{n_0,n_0+1}$ and taking the inner product of the critical spectral equation
\be 0=(-\i \omega_c\Delta +\nu_c\Delta^2 -\mu_c\Delta + L) (\psi-\psi') \ee
with $ (-\Delta -2) (\psi-\psi')$, we have
\be 0= \sum_{n,m\ge 1;\, \beta_{n,m} >2}( \nu_c \beta_{n,m}^2 +\mu_c \beta_{n,m} )(\beta_{n,m}-2) |a_{n,m}-a_{n,m}'|^2,
\ee
since $a_{n_0+1,n_0}=a'_{n_0+1,n_0}$ due to (\ref{spp10}). This gives $\psi=\psi'$ and the validity of (\ref{oneoo}). The equality of (\ref{oneoo}) can be shown by numerical computation. The eigenfunction space $E_N$ is not flow invariant with respect to the nonlinear problem (\ref{oo1}). We thus  assume the simplicity condition (\ref{oneo}) in the whole space $H^4$.

From the laboratory experiments of \cite{Sommeria84,Sommeria86} and  the numerical computation of \cite{Chen2021,ChenPrice23}, the first critical spectral solutions of (\ref{spp}) are always real and lead to steady-state bifurcations. The bifurcating steady-state solutions  are stable and thus observable experimentally \cite{Sommeria84,Sommeria86}. Thus the bifurcating time-periodic flows in the present study is not expected to be stable and is not observable experimentally.

This paper is  structured as follows.
In Section 2, we provide basic spectral properties for the Navier-Stokes equations for the use in the present study.  Based on  the results from Section 2,  we  derive  some spectral properties at a critical stage in Section 3.  Numerical computation of critical spectral  solutions are provided in Section 4, where we take $N=5$ as  displaying purpose.    A bifurcation theorem is  shown  in Section 5.  Theorem \ref{th1}  is obtained by the combination of the results in Sections 3 and 4. 

\section{Preliminary  spectral analysis}

Let us begin with the eigenfunction energy identity, which is of fundamental  importance  in the present analysis.
\begin{Lemma}\label{ll1}
Let $(\lambda,\nu,\,\mu,\,\psi)\in \mathbb{C}\times [0,\infty)\times [0,\infty)\times H^4$ solve the spectral problem (\ref{spp})-(\ref{sppp}) with $N\ge 1$.
Then there holds the  identity
\bbe0
&=& \Re\sum_{n,m\ge 1} (\lambda \beta_{n,m} + \mu \beta_{n,m} +\nu \beta_{n,m}^2)( \beta_{n,m}-2) |a_{n,m}|^2. \label{id}
\bee
\end{Lemma}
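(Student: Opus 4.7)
The plan is to test the spectral equation (\ref{spp}) against $(-\Delta-2)\psi$ in the $L_2$ inner product and then take the real part. Using the Fourier expansion (\ref{sppp}) together with the pairwise orthogonality of the modes $\sin(nx/N)\sin(my/N)$ and the fact that each is an eigenfunction of $-\Delta$ with eigenvalue $\beta_{n,m}$, the three ``linear'' pairings evaluate to
\begin{align*}
(-\lambda\Delta\psi,(-\Delta-2)\psi) &= \lambda\sum_{n,m\ge 1}\beta_{n,m}(\beta_{n,m}-2)|a_{n,m}|^2,\\
(\nu\Delta^2\psi,(-\Delta-2)\psi) &= \nu\sum_{n,m\ge 1}\beta_{n,m}^2(\beta_{n,m}-2)|a_{n,m}|^2,\\
(-\mu\Delta\psi,(-\Delta-2)\psi) &= \mu\sum_{n,m\ge 1}\beta_{n,m}(\beta_{n,m}-2)|a_{n,m}|^2.
\end{align*}

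The substantive step is then to show that the remaining contribution $(L\psi,(-\Delta-2)\psi)=-(J(\psi_0,f),f)$, with $f=(\Delta+2)\psi$, is purely imaginary. Two integration-by-parts identities for the Jacobian are needed. First, $\int_\Omega J(g,h)h\,dx\,dy=\tfrac12\int_\Omega J(g,h^2)\,dx\,dy=0$ for any real $g$ and any real $h$ that vanishes on $\partial\Omega$; second, the cyclic rearrangement $\int_\Omega J(g,h)k\,dx\,dy=\int_\Omega J(h,k)g\,dx\,dy=-\int_\Omega J(h,g)k\,dx\,dy$. Decomposing $f=u+\i v$ with $u,v$ real and recalling that $\psi_0$ is real, the real part of $\int_\Omega J(\psi_0,f)\bar f\,dx\,dy$ reduces to $\int_\Omega J(\psi_0,u)u\,dx\,dy+\int_\Omega J(\psi_0,v)v\,dx\,dy=0$ by the first identity. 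The boundary vanishing of $f$ needed for the integration by parts is guaranteed by (\ref{oo2}), which forces both $\psi$ and $\Delta\psi$ to vanish on $\partial\Omega$.

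Adding the four pairings and taking real parts eliminates the Jacobian remainder and combines the three linear pieces into the single sum on the right-hand side of (\ref{id}). The only place that requires some care is the boundary-term accounting in the Jacobian identities, but this is clean thanks to (\ref{oo2}); after that, the derivation is just Parseval on sine modes.
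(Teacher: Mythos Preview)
Your proof is correct and follows essentially the same approach as the paper: test the spectral equation against $-(\Delta+2)\psi$, observe that the Jacobian pairing has vanishing real part (you justify this more explicitly via the identity $\int J(g,h)h=\tfrac12\int J(g,h^2)=0$ under the boundary conditions (\ref{oo2}), whereas the paper simply invokes integration by parts), and then read off (\ref{id}) from the Fourier expansion.
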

\begin{proof}
Taking the inner product of the spectral equation (\ref{spp}) with $-(\Delta +2)\psi$ and then integrating by parts by the boundary condition (\ref{oo2}), we have
\be
0&=& \Re \left(-\lambda  \Delta \psi -\mu\Delta \psi+\nu \Delta^2 \psi+ J( \psi_0, (\Delta +2)\psi), -(\Delta +2)\psi\right)
\\
&=& \Re \left(-\lambda  \Delta \psi -\mu\Delta \psi+\nu \Delta^2 \psi, -(\Delta +2)\psi\right)
.\ee
This gives (\ref{id}) after an elemental calculation for the Fourier expansion of $\psi$.
\end{proof}

Consider  the conjugate spectral problem
\bbe\label{cen} 0= -\bar\lambda \Delta \psi^*-\mu\Delta\psi^* +\nu \Delta^2 \psi^* +L^*\psi^*.
\bee
  Here  $ L^*\psi^*= - (\Delta+2)J(\psi_0,\psi^*)$  is derived from (\ref{psii}).

We have the following  characterisation of the  conjugate spectral problem:
\begin{Lemma}\label{ll2} Let $(\lambda,\nu,\mu,\psi)\in \mathbb{C}\times (0,\infty)\times[0,\infty)\times  H^4$ solve the spectral problem  (\ref{spp}), (\ref{NNN}) and (\ref{sppp1}) with $N\ge 5$ odd.  Then  its   conjugate eigenfunction  presents in the following form
\bbe \psi^* =\sum_{n,m\ge 1} a_{n,m}^* \sin \frac{nx}N\sin \frac{my}N \,\,\mbox{ for  } a_{n,m}^* = (-1)^m (-\beta_{n,m}+2)\bar a_{n,m}\label{conj}
.\bee
\end{Lemma}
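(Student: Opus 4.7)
The plan is to construct $\psi^*$ explicitly from $\bar\psi$ by exploiting the $y$-reflection symmetry of the basic flow $\psi_0=\sin x\sin y$.  I introduce the operator $R\colon f(x,y)\mapsto f(x,N\pi-y)$ and establish three elementary properties: first, because $N$ is odd, $\sin(N\pi-y)=\sin y$ and hence $R\psi_0=\psi_0$; second, $R$ commutes with $\Delta$, with $\Delta+2$, and with $\Delta^2$; and third, from the Jacobian formula $J(\psi_0,f)=\partial_y f\,\partial_x\psi_0-\partial_x f\,\partial_y\psi_0$ together with $R(\partial_y g)=-\partial_y(Rg)$, $R(\partial_x g)=\partial_x(Rg)$, and the fact that $R$ flips the sign of $\partial_y\psi_0$ while preserving $\partial_x\psi_0$, one obtains the key sign-flip identity
\[
R\,J(\psi_0,f)=-J(\psi_0,Rf).
\]

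With these three facts in hand, the argument would proceed in three steps.  First, take the complex conjugate of (\ref{spp}); since $\psi_0$ is real, $\bar\psi$ satisfies the same equation with $\lambda$ replaced by $\bar\lambda$.  Second, apply $R$ to the conjugated equation, so that the commutations with $\Delta$-polynomial operators and the sign-flip identity yield
\[
0=-\bar\lambda\Delta(R\bar\psi)+\nu\Delta^2(R\bar\psi)-\mu\Delta(R\bar\psi)-J\bigl(\psi_0,(\Delta+2)R\bar\psi\bigr).
\]
Third, apply $-(\Delta+2)$ and set $\psi^*:=-(\Delta+2)R\bar\psi$.  Since $\Delta+2$ commutes with the preceding $\Delta$-polynomial factors, the linear part rearranges into $-\bar\lambda\Delta\psi^*-\mu\Delta\psi^*+\nu\Delta^2\psi^*$, while the Jacobian term becomes $-(\Delta+2)J(\psi_0,\psi^*)=L^*\psi^*$.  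What results is precisely the conjugate spectral equation (\ref{cen}).

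The Fourier formula then follows mode by mode.  Using $R\sin\frac{my}{N}=\sin(m\pi-\frac{my}{N})=-(-1)^m\sin\frac{my}{N}$ and $(\Delta+2)\sin\frac{nx}{N}\sin\frac{my}{N}=(2-\beta_{n,m})\sin\frac{nx}{N}\sin\frac{my}{N}$, applying $-(\Delta+2)R$ term by term to $\bar\psi=\sum\bar a_{n,m}\sin\frac{nx}{N}\sin\frac{my}{N}$ produces exactly $a^*_{n,m}=(-1)^m(-\beta_{n,m}+2)\bar a_{n,m}$, which is the formula claimed in (\ref{conj}).

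The only delicate point will be verifying the sign-flip identity $R\,J(\psi_0,f)=-J(\psi_0,Rf)$.  This is where the hypothesis that $N$ is odd enters crucially, because only then does $R\psi_0=\psi_0$; if $N$ were even one would have $R\psi_0=-\psi_0$, the identity would acquire the opposite sign, and the construction would no longer produce a conjugate eigenfunction of the advertised form with the alternating factor $(-1)^m$.
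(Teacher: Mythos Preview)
Your proof is correct and rests on the same two ingredients as the paper's---the conjugation identity $L^*=-(\Delta+2)L(\Delta+2)^{-1}$ and the parity flip coming from $N$ odd---but you package them differently.  The paper works on the Fourier side: it writes the adjoint equation as $(\Delta+2)(-\bar\lambda\Delta-\mu\Delta+\nu\Delta^2-L)(\Delta+2)^{-1}\psi^*=0$, sets $(\Delta+2)^{-1}\psi^*=\sum b_{n,m}\sin\frac{nx}{N}\sin\frac{my}{N}$, and then checks by direct substitution into the recursion (\ref{alg}) that $b_{n,m}=(-1)^m\bar a_{n,m}$ converts the equation with $-L$ back into the equation with $+L$, precisely because $m$ and $m\pm N$ have opposite parity when $N$ is odd.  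Your reflection operator $R$ is the geometric incarnation of that same parity substitution: $R$ acts on the $(n,m)$ mode by the factor $-(-1)^m$, so your identity $R\,J(\psi_0,f)=-J(\psi_0,Rf)$ is exactly the operator-level statement of the sign change the paper verifies coefficient by coefficient.  What your route buys is a coordinate-free derivation that never unpacks the recursion (\ref{alg}) and makes the role of the reflection symmetry of $\psi_0$ transparent; what the paper's route buys is that it stays inside the algebraic framework already set up for the rest of the spectral analysis and requires no new operator to be introduced.
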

\begin{proof}
By the derivation of  the conjugate operator, observing that $\Delta+2$ is invertible on the set of all functions (\ref{sppp1}),  we can rewrite (\ref{cen}) as
\be 0&=&-\bar\lambda \Delta\psi^*-\mu\Delta \psi^*  +\nu \Delta^2\psi^*-(\Delta +2)J(\psi_0,\psi^*)
\\
&=&
 (\Delta+2)(-\bar\lambda \Delta-\mu\Delta   +\nu \Delta^2 -L)(\Delta +2)^{-1} \psi^*
\ee
or
\be 0= (-\bar\lambda  \Delta-\mu\Delta \!+\!\nu \Delta^2  \!-\! L)(\Delta \!+\!2)^{-1} \psi^*\,
\ee
for
\be (\Delta \!+\!2)^{-1} \psi^*\!=\hspace{-2mm}\sum_{n,m\ge 1}\hspace{-2mm} b_{n,m} \sin\frac{nx}N\sin \frac{my}N.
\ee
Hence, consulting  with (\ref{alg}), we have
\begin{align*}
0
=&\sum_{n,m\ge -N} \sin \frac{nx}{N}\sin \frac{my}{N}\Big\{4N(\bar\lambda \beta_{n,m}+\mu\beta_{n,m} + \nu \beta_{n,m}^2)b_{n,m}
\\
&-(n-m)[(\beta_{n-N,m-N}-2)b_{n-N,m-N}-(\beta_{n+N,m+N}-2)b_{n+N,m+N}]
\\
&-(n+m)[(\beta_{n-N,m+N}-2)b_{n-N,m+N}
- (\beta_{n+N,m-N}-2)b_{n+N,m-N} ]\Big\}.
\end{align*}
Since $m$ and $m\pm N$ have  opposite parity due to $N$ odd,  let  $b_{n,m} = (-1)^m \bar a_{n,m}$  and apply the complex conjugate to the resultant equation to transform   the previous equation into the following:
\begin{align*}
0
=&\hspace{-2mm}\sum_{n,m\ge -N} (-1)^m\sin \frac{nx}{N}\sin \frac{my}{N}\Big\{4N(\lambda \beta_{n,m} +\mu\beta_{n,m}+\nu \beta_{n,m}^2) a_{n,m}
\\
&+(n-m)[(\beta_{n-N,m-N}-2) a_{n-N,m-N}
 -(\beta_{n+N,m+N}-2) a_{n+N,m+N}]
\\&+(n+m)[(\beta_{n-N,m+N}-2) a_{n-N,m+N}
- (\beta_{n+N,m-N}-2) a_{n+N,m-N} ]\Big\},
\end{align*}
which is identical to the algebraic equation (\ref{alg}) for $\psi$. We thus have
$$a_{n,m}^*= (-1)^m (-\beta_{n,m}+   2) \bar a_{n,m}$$ and thus the validity of (\ref{conj}).
\end{proof}

\section{Verification of the transversal crossing condition}

For the spectral problem at the critical stage $(\omega_c,\nu_c,\mu_c)$, let $(\nu,\mu)=\sigma (\nu_c,\mu_c)$. Differentiating  (\ref{spp}) with respect to $\sigma$ at $\sigma=1$, we have
  \bbe
  \frac{\p\lambda}{\partial \sigma }  \Delta \psi =\nu_c \Delta^2\psi -\mu_c \Delta \psi-\i \omega_c\Delta \frac{\p \psi}{\p \sigma}  +\nu_c \Delta^2\frac{\p \psi}{\p \sigma} -\mu_c \Delta \frac{\p \psi}{\p \sigma}+L\frac{\p \psi}{\p \sigma}
  \bee
Taking the inner product of the previous equation with the conjugate eigenfunction $\psi^*$, we have
 \bbe
  \frac{\p\lambda}{\partial \sigma }  (\Delta \psi,\psi^*) =(\nu_c \Delta^2\psi -\mu_c \Delta \psi,\psi^*)
  \bee
This section is devoted to show the transversal crossing condition  $\Re\frac{\p\lambda}{\partial \sigma } \ne 0$ at the critical stage $(\nu, \mu)= (\nu_c,\mu_c)$.

\begin{Theorem}\label{th2} For $N\ge 5$ odd, $\lambda= \i \Im \lambda\ne 0$,  $\nu>0$, $\mu \ge 0$ and $0\ne \psi\in E_N$, assume that   $(\lambda, \nu,\mu,\psi)$   solves  the spectral problem   (\ref{spp}), (\ref{NNN}) and (\ref{sppp1}).
  Then we have the properties

  \bbe
(  \psi,\Delta\psi^*)\neq 0, \label{z1}
\bee

  \bbe \Re \frac{(-\mu\psi+\nu\Delta \psi,\Delta \psi^*)}{(\psi,\Delta\psi^*)}\ne 0, \label{zzz}
  \bee
  where $\psi^*$ is the conjugate counterpart of $\psi$.

\end{Theorem}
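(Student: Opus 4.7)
The plan is to translate (\ref{z1}) and (\ref{zzz}) into explicit quadratic identities among the Fourier coefficients of $\psi$ via Lemma \ref{ll2}, exploit the swap symmetry built into $E_N$ to reduce the spectral problem to a single parity sector, and then close both claims by contradiction using Lemma \ref{ll1}.

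I would first split $\psi=\psi_u+\psi_\ell$ along the two parity sectors of $E_N$: the $u$-sector (first index odd, second even) and the $\ell$-sector (first even, second odd). Condition (\ref{spp10}) then reads $\psi_\ell=-iS\psi_u$, where $Sf(x,y)=f(y,x)$, and Lemma \ref{ll2} simplifies to $\psi^*=(\Delta+2)(\bar\psi_u-\bar\psi_\ell)$. Since $\Delta(\Delta+2)$ acts diagonally with eigenvalue $\beta_{n,m}(\beta_{n,m}-2)$ in the basis $\sin(nx/N)\sin(my/N)$, an orthogonality computation produces
\[
D:=(\psi,\Delta\psi^*)=2\sum_{u}\beta_{n,m}(\beta_{n,m}-2)\,a_{n,m}^2, \quad D_2:=(\psi,\Delta^2\psi^*)=-2\sum_{u}\beta_{n,m}^2(\beta_{n,m}-2)\,a_{n,m}^2.
\]
Writing $a_{n,m}=u_{n,m}+iv_{n,m}$ and normalizing the primary $u$-sector coefficient to $1$, the real and imaginary parts of $D$ and $N_1:=-\mu_c D+\nu_c D_2$ become explicit real quadratic forms in $\{u_{n,m},v_{n,m}\}$ over the influence indices.

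Next, I would use $\psi_\ell=-iS\psi_u$ together with the anticommutation $LS=-SL$ (a direct consequence of the swap-invariance of $\psi_0$ and the chain rule for $J(\psi_0,\cdot)$) to fuse the two parity components of (\ref{spp}) into the single reduced equation
\[
SL\psi_u=\omega_c\Delta\psi_u+i(\nu_c\Delta^2-\mu_c\Delta)\psi_u
\]
on the $u$-sector. Separating $\psi_u=\psi_u^r+i\psi_u^i$, its imaginary part reads $SL\psi_u^i=\omega_c\Delta\psi_u^i+(\nu_c\Delta^2-\mu_c\Delta)\psi_u^r$. Specialized to $\psi_u^i\equiv 0$ it collapses to $(\nu_c\Delta^2-\mu_c\Delta)\psi_u^r=0$, which forces $\psi_u^r=0$ since $\nu_c\beta^2+\mu_c\beta>0$ on every Fourier mode, hence $\psi=0$, contradicting the hypothesis. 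Consequently some $v_{n,m}$ on an influence pair must be nonzero in any admissible critical eigenfunction.

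With this structural obstruction in hand, (\ref{z1}) follows by contradiction: combining $\Re D=0$ and $\Im D=0$ with Lemma \ref{ll1} in the form $\mu_c I_1+\nu_c I_2=0$ (where $I_j=\sum\beta^j(\beta-2)|a_{n,m}|^2$) produces three real identities amongst the quadratic moments $\sum\beta^j(\beta-2)u^2$, $\sum\beta^j(\beta-2)v^2$ and $\sum\beta^j(\beta-2)uv$. Using the strict sign $\beta(\beta-2)>0$ on influence modes, a careful reassembly of these identities against the reduced equation forces $v_{n,m}\equiv 0$, returning to the quasi-real case already excluded. The proof of (\ref{zzz}) proceeds identically: the hypothesis $\Re(N_1\bar D)=0$ supplies one additional quadratic relation whose combination with Lemma \ref{ll1} again collapses to $v_{n,m}\equiv 0$. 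The main obstacle is this last algebraic step: isolating each $v_{n,m}^2$ from a short list of scalar identities requires delicate sign bookkeeping together with a Cauchy--Schwarz-type estimate exploiting $\beta-2>0$ on every influence mode to decouple the $v$-moments from the $u$-moments.
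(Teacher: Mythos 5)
Your setup coincides with the paper's own: Lemma \ref{ll2} combined with the relation $a_{n,m}=-\i a_{m,n}$ does give $\psi^*=(\Delta+2)(\overline{\psi_u}-\overline{\psi_\ell})$, hence $D=(\psi,\Delta\psi^*)=2\sum\beta_{n,m}(\beta_{n,m}-2)a_{n,m}^2$ over a single parity sector (this is exactly (\ref{d22})), and your $N_1=-\mu D+\nu D_2$ is the paper's (\ref{d2}). Your reduced single-sector equation via $SL=-LS$ is also correct, and the conclusion you draw from it --- that a critical eigenfunction in $E_N$ cannot have a real $u$-component, so some influence coefficient must have nonzero imaginary part --- is genuinely useful: the paper's treatment of the degenerate case of (\ref{zzz}) ends with ``which is not equal to zero, unless $\Im a_{n,m}\equiv 0$'' without excluding that alternative, and your structural lemma is precisely what excludes it.

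The gap is in the closing mechanism, which is where the actual content of the theorem lies. The paper does not argue by forcing $v_{n,m}\equiv 0$; it proves (\ref{z1}) by a direct strict bound: the energy identity (\ref{idd}) of Lemma \ref{ll1} (with $\Re\lambda=0$) rewrites the negative primary-mode term as $-2\beta_{n_0,n_0+1}(\beta_{n_0,n_0+1}-2)|a_{n_0,n_0+1}|^2=\sum_{\beta_{n,m}>2}2\frac{\mu+\nu\beta_{n,m}}{\mu+\nu\beta_{n_0,n_0+1}}\beta_{n,m}(\beta_{n,m}-2)|a_{n,m}|^2$, and since every influence mode has $\beta_{n,m}>2>\beta_{n_0,n_0+1}$ the weight strictly exceeds $1$; the triangle inequality then gives $|(\psi,\Delta\psi^*)|\ge\sum_{\beta_{n,m}>2}2\nu(\beta_{n,m}-\beta_{n_0,n_0+1})\beta_{n,m}(\beta_{n,m}-2)|a_{n,m}|^2/(\mu+\nu\beta_{n_0,n_0+1})>0$, which is (\ref{aa2}). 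This domination inequality is the whole proof, and it never appears in your plan. What you propose instead --- extracting $v_{n,m}\equiv 0$ for \emph{every} influence index from $\Re D=0$, $\Im D=0$ and the energy identity --- cannot work: these are three scalar relations in infinitely many unknowns, the relation $\Im D=4\sum\beta_{n,m}(\beta_{n,m}-2)u_{n,m}v_{n,m}$ is indefinite, and no Cauchy--Schwarz estimate will decouple it termwise. For (\ref{zzz}) the difficulty is compounded: $\Re(N_1\overline{D})=0$ is quartic in the coefficients, not quadratic, and the paper needs a genuine case split on whether $\Im(N_1/D)$ vanishes (its condition (\ref{aa5})), using the strict inequality $\Re z<|z|$ for $\Im z\ne 0$ in one case and, in the other, the positivity of $\sum 4(\mu+\nu\beta_{n,m})\beta_{n,m}(\beta_{n,m}-2)v_{n,m}^2$ --- where your no-real-eigenfunction lemma is what rules out the all-$v$-zero alternative. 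Replace the ``collapse to $v\equiv 0$'' step by the weighted comparison above and the argument closes; as written, neither (\ref{z1}) nor (\ref{zzz}) is established.
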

Equation (\ref{zzz}) is actually the eigenvalue transversal crossing condition. For the square vortex flow problem, Theorem \ref{th2} shows the validity of the condition.

\begin{proof}

To prove (\ref{z1}), we see that $\psi\in E_N$ is in the form of (\ref{sppp1}).
Therefore, by Lemma \ref{ll2}, (\ref{sppp1}) and the property  $a_{n,m}= -\i a_{m,n}$ for $n$ even  and $m$ odd due to $\psi \in E_N$,  we  have
\begin{align}
 ( \psi,\Delta\psi^*)&=-( \sum_{\text{$n$ even, $m$ odd}}+\sum_{\text{$n$ odd, $m$ even}}) \beta_{n,m} a_{n,m} \bar {a^*}_{n,m}\nonumber
  \\
 &=\sum_{\text{$n$ odd, $m$ even}} \beta_{n,m}(\beta_{n,m}-2)(a_{n,m}^2-a_{m,n}^2)\nonumber
  \\
 &=\sum_{\text{$n$ odd, $m$ even}}2 \beta_{n,m}(\beta_{n,m}-2)a_{n,m}^2\label{d22}
  \\
 &=2 \beta_{n_0,n_0+1}(\beta_{n_0,n_0+1}\!-\!2)a_{n_0,n_0+1}^2
 +\hspace{-9mm}\sum_{\text{$n$ odd, $m$ even}; \beta_{n,m}>2}\hspace{-10mm}2 \beta_{n,m}(\beta_{n,m}\!-\!2)a_{n,m}^2.\label{d11}
 \end{align}
By Lemma \ref{ll1} and the property  $\Re\lambda =0$,  we have
\bbe
0&=& \sum_{\text{$n$ odd, $m$ even}}\!\!\!\!\! 2(\mu\beta_{n,m}+\nu\beta^2_{n,m})(\beta_{n,m}-2)|a_{n,m}|^2\nonumber
\\
&=&2(\mu\beta_{n_0,n_0+1}+\nu\beta_{n_0,n_0+1}^2)(\beta_{n_0,n_0+1}-2)|a_{n_0,n_0+1}|^2  \nonumber
\\
&&+\sum_{\text{$n$ odd, $m$ even}, \beta_{n,m}>2}\!\!\!\!\! 2(\mu\beta_{n,m}+\nu\beta^2_{n,m})(\beta_{n,m}-2)|a_{n,m}|^2\label{idd}
.\bee
The combination of (\ref{d11}) and (\ref{idd}) implies
\begin{align}
 |( \psi,\Delta\psi^*)|
 &\ge  -2\beta_{n_0,n_0+1}(\beta_{n_0,n_0+1}\!-\!2)|a_{n_0,n_0+1}|^2 \nonumber
 \\
 &-\sum_{\text{$n$ odd, $m$ even}, \,\beta_{n,m}>2}2\beta_{n,m}(\beta_{n,m}-2)|a_{n,m}|^2\nonumber
 \\
&= \sum_{\text{$n$ odd, $m$ even},\, \beta_{n,m}>2}\!\!\!\!\! 2\frac{(\mu+\nu\beta_{n,m})\beta_{n,m}(\beta_{n,m}-2)|a_{n,m}|^2}{\mu +\nu \beta_{n_0,n_0+1}}\nonumber
 \\
 &-\sum_{\text{$n$ odd, $m$ even},\, \beta_{n,m}>2}2\beta_{n,m}(\beta_{n,m}-2)|a_{n,m}|^2\nonumber
 \\
 &= \!\!\!\!\!\!\!\!\!\!  \sum_{\text{$n$ odd, $m$ even}, \,\beta_{n,m}>2}\!\!\!\!\!\!\!\!\!\! 2\frac{ \nu(\beta_{n,m}-\beta_{n_0,n_0+1})\beta_{n,m}(\beta_{n,m}-2)|a_{n,m}|^2}{\mu+\nu\beta_{n_0,n_0+1}}>0\label{aa2}
.\end{align}
We thus obtain (\ref{z1}).

To prove (\ref{zzz}), we use Lemma \ref{ll2} and  the property  $a_{m,n}= -\i a_{n,m}$ for $m$ even  and $n$ odd  to produce
\bbe
 \lefteqn{( -\mu\psi+\nu\Delta \psi,\Delta\psi^*)}\nonumber\\
 &=& (\sum_{\text{$n$ even, $m$ odd}} +\sum_{\text{$n$ odd, $m$ even}})(\mu\beta_{n,m}+\nu\beta^2_{n,m}) a_{n,m} \overline{a_{n,m}^*}\nonumber
 \\
 &=&-\ (\sum_{\text{$n$ even, $m$ odd}} +\sum_{\text{$n$ odd, $m$ even}}) (-1)^m (\mu\beta_{n,m}+\nu\beta^2_{n,m}) (\beta_{n,m}-2)a_{n,m}^2\nonumber
 \\
 &=&-\sum_{\text{$n$ odd, $m$ even}} (\mu\beta_{n,m}+\nu\beta^2_{n,m})(\beta_{n,m}-2)(a_{n,m}^2-a_{m,n}^2)\nonumber
 \\
 &=&-\sum_{\text{$n$ odd, $m$ even}} 2(\mu\beta_{n,m}+\nu\beta^2_{n,m})(\beta_{n,m}-2)a_{n,m}^2.\label{d2}
  \bee
  By (\ref{sppp1}), we have
 \begin{align}
  (-\mu\psi +\nu\Delta \psi,\Delta\psi^*)&=(-1)^{n_0}2(\mu\beta_{n_0,n_0+1}+\nu\beta^2_{n_0,n_0+1})(\beta_{n_0,n_0+1}-2)a_{n_0,n_0+1}^2\nonumber
\\
 &-\hspace{-10mm}\sum_{\text{$n$ odd, $m$ even},\,\beta_{n,m}>2}\hspace{-10mm} 2(\mu\beta_{n,m}+\nu\beta^2_{n,m})(\beta_{n,m}-2)a_{n,m}^2.\label{d3}
\end{align}

On the other hand, we see that
$$
(-\mu\psi\!+\!\nu\Delta \psi,\Delta\psi^*)=(-\mu\!-\!\nu\beta_{n_0,n_0+1})(\psi,\Delta\psi^*)\!+\!\nu((\Delta+\beta_{n_0,n_0+1})\psi,\Delta\psi^*).
$$
By (\ref{z1}), we have
\bbe
\frac{(-\mu\psi +\nu\Delta \psi,\Delta\psi^*)}{(\psi,\Delta\psi^*)}=-\mu -\nu\beta_{n_0,n_0+1} +   \nu\frac{\Big((\Delta+\beta_{n_0,n_0+1})\psi,\Delta\psi^*\Big)}{(\psi,\Delta\psi^*)}.\label{non}
\bee

If we have the non-zero   imaginary part
\bbe
 \Im\frac{(-\mu\psi +\nu\Delta \psi,\Delta\psi^*)}{(\psi,\Delta\psi^*)}\ne 0,\label{aa5}
 \bee
  equation (\ref{non}) together with (\ref{d2}) and (\ref{aa5}) implies
\begin{align*}
\Re&\frac{(-\mu\psi +\nu\Delta \psi,\Delta\psi^*)}{(\psi,\Delta\psi^*)}
\\
&=-\mu -\nu\beta_{n_0,n_0+1}+ \nu\Re\frac{\Big(( \Delta+\beta_{n_0,n_0+1})\Delta\psi, \psi^*\Big)}{(\psi,\Delta\psi^*)}
\\
&<-\mu -\nu\beta_{n_0,n_0+1}+ \nu\Bigg|\frac{\Big(( \Delta+\beta_{n_0,n_0+1})\Delta\psi, \psi^*\Big)}{(\psi,\Delta\psi^*)}\Bigg|
\\
&=-\mu -\nu\beta_{n_0,n_0+1} + \frac{\Big|\displaystyle\sum_{\text{$n$ odd, $m$ even},\,\beta_{n,m}>2}\hspace{-5mm}
 2\nu( \beta_{n,m}\!-\!\beta_{n_0,n_0+1})\beta_{n,m}(\beta_{n,m}\!-\!2)a_{n,m}^2\Big|}{|(\psi,\Delta\psi^*)|}
\\
&\le-\mu -\nu\beta_{n_0,n_0+1} + \frac{\displaystyle\sum_{\text{$n$ odd, $m$ even},\,\beta_{n,m}>2}\!\!\!\!\!\!\!\!
 2\nu( \beta_{n,m}\!-\!\beta_{n_0,n_0+1})\beta_{n,m}(\beta_{n,m}\!-\!2)|a_{n,m}|^2}{|(\psi,\Delta\psi^*)|}
 \\
\!\!&=-\mu -\nu\beta_{n_0,n_0+1} + \frac{\displaystyle\sum_{\text{$n$ odd, $m$ even}, \,\beta_{n,m}>2} 2\nu\beta_{n,m}^2(\beta_{n,m}-2)|a_{n,m}|^2}{|(\psi,\Delta\psi^*)|}
\\
&-\frac{\displaystyle\sum_{\text{$n$ odd, $m$ even}, \,\beta_{n,m}>2} 2\nu\beta_{n_0,n_0+1}\beta_{n,m}(\beta_{n,m}-2)|a_{n,m}|^2}{|(\psi,\Delta\psi^*)|},
\end{align*}
which equals,
\begin{align*}
\!\!&-\mu -\nu\beta_{n_0,n_0+1}  + \frac{\displaystyle\sum_{\text{$n$ odd, $m$ even}, \,\beta_{n,m}>2} 2(\mu+\nu\beta_{n,m})\beta_{n,m}(\beta_{n,m}-2)|a_{n,m}|^2}{|(\psi,\Delta\psi^*)|}
\\
\!\!&\!\!-\frac{\displaystyle\sum_{\text{$n$ odd, $m$ even},\, \beta_{n,m}>2} 2(\mu+\nu\beta_{n_0,n_0+1})\beta_{n,m}(\beta_{n,m}-2)|a_{n,m}|^2}{|(\psi,\Delta\psi^*)|},
\end{align*}
which becomes, by  (\ref{sppp1}) and (\ref{idd}),
\be
&&  -\mu -\nu\beta_{n_0,n_0+1} - \frac{2(\mu+\nu\beta_{n_0,n_0+1})\beta_{n_0,n_0+1}(\beta_{n_0,n_0+1}-2)|a_{n_0,n_0+1}|^2}{|(\psi,\Delta\psi^*)|}
\\
&&- \frac{\displaystyle\sum_{\text{$n$ odd, $m$ even},\, \beta_{n,m}>2} 2(\mu+\nu\beta_{n_0,n_0+1})\beta_{n,m}(\beta_{n,m}-2)|a_{n,m}|^2}{|(\psi,\Delta\psi^*)|}
.\ee
Hence, by (\ref{d22}) and the condition $\beta_{n_0,n_0+1}<2$, we have
\be
\lefteqn{\Re\frac{(-\mu\psi +\nu\Delta \psi,\Delta\psi^*)}{(\psi,\Delta\psi^*)}}
 \\&<&  -\frac{\Big|\displaystyle\sum_{\text{$n$ odd, $m$ even}} 2(\mu+\nu\beta_{n_0,n_0+1})\beta_{n,m}(\beta_{n,m}-2)a_{n,m}^2\Big|}{|(\psi,\Delta\psi^*)|}
\\
&&+\frac{2(\mu+\nu\beta_{n_0,n_0+1})\beta_{n_0,n_0+1}|\beta_{n_0,n_0+1}-2||a_{n_0,n_0+1}|^2}{|(\psi,\Delta\psi^*)|}
\\
&&- \frac{\displaystyle\sum_{\text{$n$ odd, $m$ even}, \,\beta_{n,m}>2} 2(\mu+\nu\beta_{n_0,n_0+1})\beta_{n,m}(\beta_{n,m}-2)|a_{n,m}|^2}{|(\psi,\Delta\psi^*)|},
\ee
which is upper bounded by
\be
&& \frac{|\displaystyle\sum_{\text{$n$ odd, $m$ even},\, \beta_{n,m}>2} 2(\mu+\nu\beta_{n_0,n_0+1})\beta_{n,m}(\beta_{n,m}-2)a_{n,m}^2|}{|(\psi,\Delta\psi^*)|}
\\
&&-\frac{\displaystyle\sum_{\text{$n$ odd, $m$ even}, \,\beta_{n,m}>2} 2(\mu+\nu\beta_{n_0,n_0+1})\beta_{n,m}(\beta_{n,m}-2)|a_{n,m}|^2}{|(\psi,\Delta\psi^*)|}\le 0
.\ee
This gives the validity of (\ref{zzz}) under  condition (\ref{aa5}).

If (\ref{aa5}) is not true, we have
\be
\Re \frac{(-\mu\psi +\nu\Delta \psi,\Delta\psi^*)}{(\Delta \psi,\psi^*)}= \frac{(-\mu\psi +\nu\Delta \psi,\Delta\psi^*)}{(\Delta \psi,\psi^*)}.
\ee
Thus it remains to show that
\be (-\mu\psi +\nu\Delta \psi,\Delta\psi^*)\ne 0.
\ee

Indeed, without loss of generality, we may assume $a_{n_0,n_0+1}=1$  as the spectral problem is linear. It follows from  (\ref{idd}) and (\ref{d3}) that
\begin{align*}
(-\mu\psi &+\nu\Delta \psi,\Delta\psi^*)
\\
&= (-1)^{n_0}2(\mu+\nu\beta_{n_0,n_0+1})\beta_{n_0,n_0+1}(\beta_{n_0,n_0+1}-2)
\\
&-\sum_{\text{$n$ odd, $m$ even},\, \beta_{n,m}>2} 2(\mu+\nu\beta_{n,m})\beta_{n,m}(\beta_{n,m}-2)a_{n,m}^2
\\
&= \hspace{-3mm}\sum_{\text{$n$ odd, $m$ even}, \,\beta_{n,m}>2} \hspace{-3mm} 2(\mu+\nu\beta_{n,m})\beta_{n,m}(\beta_{n,m}-2)((-1)^{n_0}|a_{n,m}|^2-a_{n,m}^2),
\end{align*}
which is not equal to zero, unless $\Im a_{n,m}\equiv 0$.
We thus obtain (\ref{zzz}).

The proof of Theorem  \ref{th2} is complete.
\end{proof}

\section{Numerical spectral solutions}

For $N=n_0+m_0\ge 5$ with  $m_0=n_0+1$, we consider the spectral problem (\ref{spp}) with the eigenfunction in the Fourier expansion  (\ref{sppp1}) generated by the principle  modes $\sin \frac{n_0x}{N}\sin \frac{m_0y}N$ and $\sin \frac{m_0x}{N}\sin \frac{n_0y}N$.
This spectral problem  specified in (\ref{alg}) can rewritten as
\begin{align}
0 \label{nspp}
=&\sum_{n,m\ge -N} \sin \frac{nx}{N}\sin \frac{my}{N}\Big\{ (\lambda+\mu +\nu\beta_{n,m})a_{n,m}
\\
&+\frac{n-m}{4N\beta_{n,m}}[(\beta_{n-N,m-N}-2)a_{n-N,m-N}
-(\beta_{n+N,m+N}-2)a_{n+N,m+N}]\nonumber
\\&+\frac{n+m}{4N\beta_{n,m}}[(\beta_{n-N,m+N}-2)a_{n-N,m+N}
- (\beta_{n+N,m-N}-2)a_{n+N,m-N} ]\Big\},\nonumber
\end{align}
This is a  coupled system of infinite algebraic equations. For  the understanding of the system, these individual equations are displayed as
\begin{align*}
0=&(\lambda+\mu +\nu\beta_{n_0,m_0})a_{n_0,m_0}
\\
&-\frac{1}{4N\beta_{n_0,m_0}}[(\beta_{m_0,n_0}-2)a_{m_0,n_0}-(\beta_{n_0+N,m_0+N}-2)a_{n_0+N,m_0+N}]
\\
&-\frac{N}{4N\beta_{n_0,m_0}}[(\beta_{m_0,m_0+N}-2)a_{m_0,m_0+N}-(\beta_{n_0+N,n_0}-2)a_{n_0+N,n_0}
],
\\
0=&(\lambda+\mu +\nu\beta_{m_0,n_0})a_{m_0,n_0}
\\
&+\frac{1}{4N\beta_{m_0,n_0}}[(\beta_{n_0,m_0}-2)a_{n_0,m_0}-(\beta_{m_0+N,n_0+N}-2)a_{m_0+N,n_0+N}]
\\
&+\frac{N}{4N\beta_{m_0,n_0}}[(\beta_{m_0+N,m_0}-2)a_{m_0+N,m_0}-(\beta_{n_0,n_0+N}-2)a_{n_0,n_0+N}
],
\\
0
=&(\lambda+\mu +\nu\beta_{n_0+N,n_0})a_{n_0+N,n_0}
\\
&+\frac{2n_0+N}{4N\beta_{n_0+N,n_0}}[(\beta_{n_0+2N,m_0}-2)a_{n_0+2N,m_0}+(\beta_{n_0,n_0+N}-2)a_{n_0,n_0+N}]\nonumber
\\
&-\frac{N}{4N\beta_{n_0+N,n_0}}[(\beta_{n_0,m_0}-2)a_{n_0,m_0}+(\beta_{n_0+2N,n_0+N}-2)a_{n_0+2N,n_0+N}],\nonumber
\\ 
0
=& (\lambda+\mu +\nu\beta_{n_0,n_0+N})a_{n_0,n_0+N}
\\
&-\frac{2n_0+N}{4N\beta_{n_0,n_0+N}}[(\beta_{m_0,n_0+2N}-2)a_{m_0,n_0+2N}+ (\beta_{n_0+N,n_0}-2)a_{n_0+N,n_0}  ]\nonumber
\\
&+\frac{N}{4N\beta_{n_0,n_0+N}}[(\beta_{m_0,n_0}-2)a_{m_0,n_0}+ (\beta_{n_0+N,n_0+2N}-2)a_{n_0+N,n_0+2N}],\nonumber
\\
& ...,
\end{align*}\vspace{-8mm}
\begin{align*}
0
=&(\lambda+\mu +\nu\beta_{n,m})a_{n,m}
\\
&+\frac{n-m}{4N\beta_{n,m}}[(\beta_{n-N,m-N}-2)a_{n-N,m-N}
-(\beta_{n+N,m+N}-2)a_{n+N,m+N}]\nonumber
\\&+\frac{n+m}{4N\beta_{n,m}}[(\beta_{n-N,m+N}-2)a_{n-N,m+N}
- (\beta_{n+N,m-N}-2)a_{n+N,m-N} ]\Big\},\nonumber
\\
&...
\end{align*}
for $(n,m)= (n_0+jN,m_0+kN),$ $ (m_0+jN,n_0+kN),$ $(n_0+jN,n_0+kN)$ and $(m_0+jN,m_0+kN)$, respectively,  with positive integers $j$ and $k$.

For the odd integer $N\ge 5$, this equation system is essentially the same from viewpoint of  numerical computation. Numerical critical spectral solution
is obtained.
 The numerical  eigenvalue  $\lambda=\i \omega_c \ne 0$ reaches the imaginary line at a   critical value  $(\nu_c,\mu_c)$.
This critical parameter $(\omega_c,\nu_c,\mu_c)$ is unique under the condition $\mu_c=0$.
The corresponding numerical  eigenfunction $\psi_c$  is unique up to a constant factor. The coefficients of $\psi_c$ are subject to the condition
\bbe a_{n,m} =-\i a_{m,n},\,\,\, \mbox{ for  $n$ even and $m$ odd.}\label{anm}
\bee
Hence the equality of (\ref{oneoo}) is true by numerical computation.


 For displaying purpose, we only provide computation results for $N=5$. The corresponding spectral equations become
 \begin{align*}
0=& (\lambda\!+\!\mu \!+\!\nu\beta_{2,3})a_{2,3}
\\
&
-\frac{(\beta_{3,2}-2)a_{3,2}-(\beta_{7,8}-2)a_{7,8}}{20\beta_{2,3}}
-\frac{(\beta_{3,8}-2)a_{3,8}-(\beta_{7,2}-2)a_{7,2}}{4\beta_{2,3}},
\\
0=& (\lambda\!+\!\mu \!+\!\nu\beta_{3,2})a_{3,2}
\\
&
+\frac{(\beta_{2,3}-2)a_{2,3}-(\beta_{8,7}-2)a_{8,7}}{20\beta_{3,2}}
+\frac{(\beta_{8,3}-2)a_{8,3}-(\beta_{2,7}-2)a_{2,7}}{4\beta_{3,2}},
\\
0
=&(\lambda\!+\!\mu \!+\!\nu\beta_{7,2})a_{7,2}
\\
&
+9\frac{(\beta_{12,3}-2)a_{12,3}+(\beta_{2,7}-2)a_{2,7}}{20\beta_{7,2}}\nonumber
-\frac{(\beta_{2,3}-2)a_{2,3}+(\beta_{12,7}-2)a_{12,7}}{4\beta_{7,2}},
\\ 
0
=&(\lambda\!+\!\mu \!+\!\nu\beta_{2,7})a_{2,7}
\\
&
-9\frac{(\beta_{3,12}-2)a_{3,12}+ (\beta_{7,2}-2)a_{7,2}  }{20\beta_{2,7}}\nonumber
+\frac{(\beta_{3,2}-2)a_{3,2}+ (\beta_{7,12}-2)a_{7,12}}{4\beta_{2,7}},\nonumber
\\
& ...,
\end{align*}\vspace{-8mm}
\begin{align*}
0
=&(\lambda+\mu +\nu\beta_{n,m})a_{n,m}
\\
&+\frac{n-m}{4N\beta_{n,m}}[(\beta_{n-N,m-N}-2)a_{n-N,m-N}
-(\beta_{n+N,m+N}-2)a_{n+N,m+N}]\nonumber
\\&+\frac{n+m}{4N\beta_{n,m}}[(\beta_{n-N,m+N}-2)a_{n-N,m+N}
- (\beta_{n+N,m-N}-2)a_{n+N,m-N} ]\Big\},\nonumber
\\
&...
\end{align*}
for $(n,m)= (2+5j,3+5k),$ $ (3+5j,2+5k),$ $(2+5j,2+5k)$ and $(3+5j,3+5k)$, respectively,  with positive integers $j$ and $k$.

 For the  critical spectral solutions $(\i\omega_c,\nu_c,\mu_c, \psi_c)$  and $(-\i\omega_c,\nu_c,\mu_c, \bar \psi_c)$ at the pure viscosity case, we obtain
  $(\omega_c,\nu_c,\mu_c) \approx(0.1397, 0.2064,0)$. Their corresponding critical eigenfunction  $\psi_c$ generated 
  by the modes $\sin\frac{2x}5 \sin\frac{3y}5$ and
  $\sin\frac{3x}5 \sin\frac{2y}5$  is    displayed in Figure \ref{f2}.

 \begin{figure}
 \centering
\includegraphics[height=.45\textwidth, width=.45\textwidth]{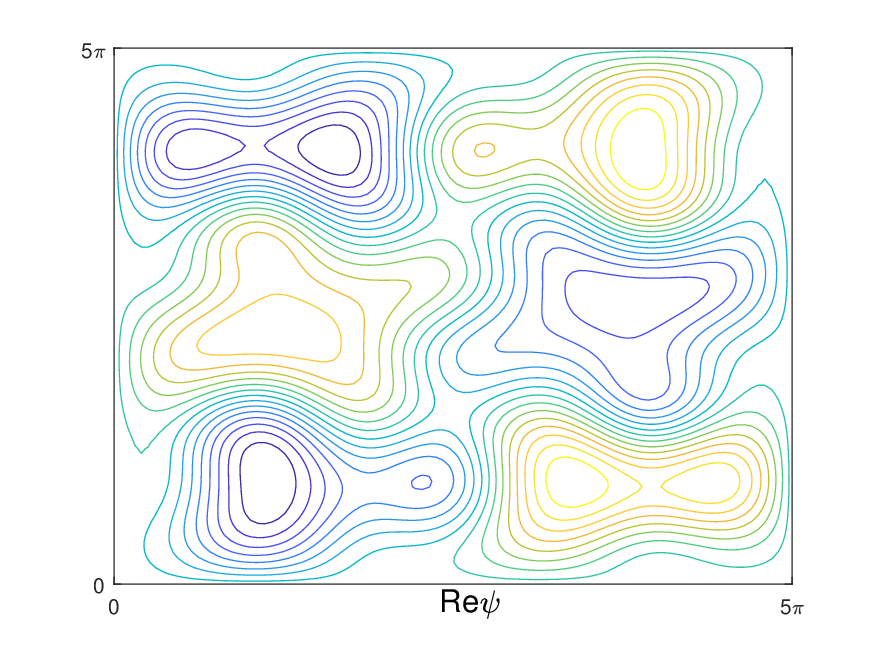}
\includegraphics[height=.45\textwidth, width=.45\textwidth]{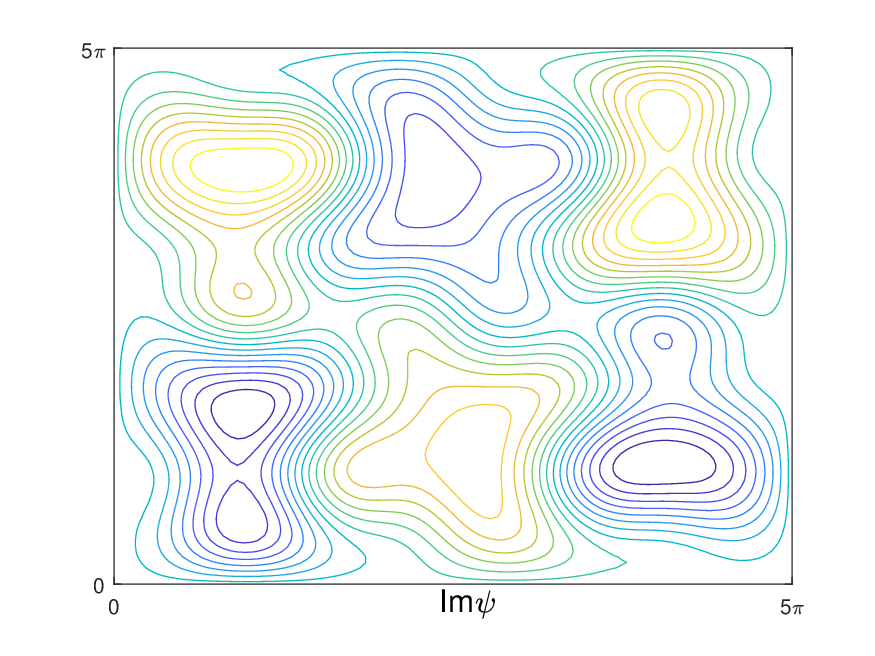}
 \caption{
  Real and imaginary parts of the critical eigenfunction $\psi_c\in E_5$ at $(\i\omega_c,\nu_c,\mu_c) \approx(0.1397, 0.2064,0)$ .}

  \label{f2}
 \end{figure}

It should be noted that  the number of algebraic equations are significantly reduced after the use of (\ref{anm}), as (\ref{nspp}) can be replaced by
\begin{align}
0
=&\sum_{n,m\ge -N; \, n\, even, \, m \, odd} \sin \frac{nx}{N}\sin \frac{my}{N}\Big\{ (\lambda+\mu +\nu\beta_{n,m})\i a_{m,n}
\\
&+\frac{n-m}{4N\beta_{n,m}}[(\beta_{n-N,m-N}-2)a_{n-N,m-N}
-(\beta_{n+N,m+N}-2)a_{n+N,m+N}]\nonumber
\\&+\frac{n+m}{4N\beta_{n,m}}[(\beta_{n-N,m+N}-2)a_{n-N,m+N}
- (\beta_{n+N,m-N}-2)a_{n+N,m-N} ]\Big\},\nonumber
\end{align}
which only involves the expansion coefficients with the  same parity  indices.

\section{ Proof of Theorem \ref{th1} }

By the derivation  of \cite[Lemmas 7.3-7.6]{JS}, we have the following invertibility lemma.
\begin{Lemma}\label{ll}
For a critical value $(\mu_c,\nu_c)$ with $\nu_c>0$ and $\mu_c \ge 0$, we assume that the critical spectral problem
\be(-\mu_c +\nu_c\Delta + \Delta ^{-1}L) \psi = \lambda \psi,\,\,\,\, \Im\lambda \ne 0 \mbox{ and } 0\ne \psi\in H^4\ee
has exactly two spectral solutions
$(-\i \omega_c, \psi_c)$ and $(\i \omega_c, \bar \psi_c)$.
Then the operator
$(-\omega_c\p_s-\mu + \nu_c\Delta +\Delta^{-1} L)$ is a bijection mapping $PC^{2+2\alpha, 1+\alpha}_{2\pi}$ onto $PC^{2\alpha, \alpha}_{2\pi}$ for $0<\alpha < \frac12 $
and
\be \|P\psi\|_{2+2\alpha,1+\alpha}\le C\|(-\omega_c\p_s+ \nu_c\Delta +\Delta^{-1} L)P\psi\|_{2\alpha,\alpha},
\ee
where $C$ is a generic constant throughout the paper.
\end{Lemma}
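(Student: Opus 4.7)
My plan is to view
\[\mathcal{A} := -\omega_c\p_s + \nu_c\Delta + \Delta^{-1}L - \mu_c\]
as a bounded Fredholm operator of index zero from $C^{2+2\alpha,1+\alpha}_{2\pi}$ into $C^{2\alpha,\alpha}_{2\pi}$, compute its two-dimensional kernel by a time-Fourier decomposition powered by the spectral hypothesis, observe that $P$ is exactly the projection complementary to that kernel and commutes with $\mathcal{A}$, and then deduce bijectivity and the asserted estimate from the open mapping theorem. This follows the Joseph--Sattinger strategy \cite{JS} cited at the head of the lemma.

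First I would establish the Fredholm property. The principal part $-\omega_c\p_s+\nu_c\Delta$ is uniformly parabolic on $\Omega\times[0,2\pi]$ with $\psi|_{\p\Omega}=\Delta\psi|_{\p\Omega}=0$ and $2\pi$-periodicity in $s$; the Solonnikov--Ladyzhenskaya Schauder theory \cite{Lady,Sat} delivers both the a priori estimate
\[\|\psi\|_{2+2\alpha,1+\alpha}\le C\bigl(\|(-\omega_c\p_s+\nu_c\Delta)\psi\|_{2\alpha,\alpha}+\|\psi\|_{C^0}\bigr)\]
and solvability of the associated time-periodic biharmonic boundary value problem, so this leading part is Fredholm of index zero. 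Because $L\psi=J(\psi_0,(\Delta+2)\psi)$ is of second spatial order and $\Delta^{-1}$ recovers two derivatives under (\ref{oo2}), the term $\Delta^{-1}L-\mu_c$ is bounded from $C^{2+2\alpha,1+\alpha}_{2\pi}$ into itself and hence compact into $C^{2\alpha,\alpha}_{2\pi}$ by the standard compact embedding of Hölder spaces on the bounded cylinder. This preserves the index-zero property of $\mathcal{A}$.

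Next I would compute $\ker\mathcal{A}$ via the Fourier expansion $\psi=\sum_{k\in\mathbb{Z}}e^{\i k s}\psi_k$, which decouples $\mathcal{A}\psi=0$ into the spatial eigenvalue equations $(-\mu_c+\nu_c\Delta+\Delta^{-1}L)\psi_k=\i k\omega_c\,\psi_k$. The spectral hypothesis asserts that the only non-real eigenvalues of $-\mu_c+\nu_c\Delta+\Delta^{-1}L$ are $\pm\i\omega_c$, with simple eigenfunctions $\psi_c,\bar\psi_c$; combined with the genuine Hopf assumption $0\notin\sigma(-\mu_c+\nu_c\Delta+\Delta^{-1}L)$ needed for the $k=0$ mode, this forces $\ker\mathcal{A}=\mathrm{span}_{\mathbb{R}}\{e^{-\i s}\psi_c,\,e^{\i s}\bar\psi_c\}$. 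After normalising $(\psi_c,\Delta\psi_c^*)=1$, which is licit by (\ref{z1}), the time-Fourier orthogonality $\langle e^{\pm\i s}\psi_c,e^{\mp\i s}\Delta\psi_c^*\rangle=0$ makes $P$ a genuine projection with $\ker P=\ker\mathcal{A}$. A direct calculation using the $L_2$ adjoint $\mathcal{A}^*=\omega_c\p_s+\nu_c\Delta+L^*\Delta^{-1}-\mu_c$ together with (\ref{cenn}) shows that $e^{\mp\i s}\Delta\psi_c^*$ is an eigenvector of $\mathcal{A}^*$ with eigenvalue $\mp 2\i\omega_c$; hence the span appearing in the definition of $P$ is $\mathcal{A}^*$-invariant, and $P$ therefore commutes with $\mathcal{A}$.

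Commutation implies $\mathcal{A}(PC^{2+2\alpha,1+\alpha}_{2\pi})\subset PC^{2\alpha,\alpha}_{2\pi}$, and injectivity on this subspace is immediate from $\ker\mathcal{A}\cap\mathrm{Ran}\,P=\{0\}$. Because $\mathcal{A}$ has index zero and $P$ simultaneously annihilates $\ker\mathcal{A}$ and a matching complement of $\mathrm{Ran}\,\mathcal{A}$, the restricted operator is also surjective, and the bounded inverse theorem delivers
\[\|P\psi\|_{2+2\alpha,1+\alpha}\le C\|\mathcal{A}P\psi\|_{2\alpha,\alpha}.\]
The main technical obstacle is the Schauder step for the coupled fourth-order boundary condition $\psi=\Delta\psi=0$. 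The cleanest workaround is to introduce $\varphi:=-\Delta\psi$, decouple the problem into a second-order time-periodic heat equation for $\varphi$ with zero Dirichlet data followed by a Poisson problem $\Delta\psi=-\varphi$ with zero Dirichlet data, apply the standard second-order parabolic and elliptic Schauder estimates of \cite{Lady,Sat} separately to each step, and then absorb the zeroth-order perturbation $\Delta^{-1}L-\mu_c$ through the compact-perturbation argument above.
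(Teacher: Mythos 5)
The paper offers no proof of this lemma beyond the citation of Joseph--Sattinger \cite{JS}, Lemmas 7.3--7.6, and your sketch is a correct reconstruction of essentially that argument: Fredholm of index zero via parabolic Schauder theory plus a compact lower-order perturbation, identification of the two-dimensional kernel by time-Fourier decomposition and the spectral hypothesis, $\mathcal{A}^*$-invariance of the span defining $P$ (which is all that is needed for $\mathcal{A}$ to preserve $\mathrm{Ran}\,P$, whether or not the sign conventions of (\ref{cenn}) place $e^{\mp\i s}\Delta\psi_c^*$ in $\ker\mathcal{A}^*$ or merely in an invariant eigenspace), and the open mapping theorem. You also rightly flag the one genuine subtlety, namely that the stated hypothesis must implicitly include that $0$ (and $\i k\omega_c$ for $|k|\ge 2$) is not an eigenvalue of $-\mu_c+\nu_c\Delta+\Delta^{-1}L$, which is needed for the $k=0$ Fourier mode and is assumed in \cite{JS} but not written into the lemma.
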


Theorem \ref{th1}  is implied from Theorem \ref{th2}  and the following theorem.

\begin{Theorem} \label{th3} In addition to the assumption of Lemma \ref{ll}, we assume that
\bbe (\psi_c, \Delta \psi_c^*)\ne 0\label{10a}\bee
and
\bbe \Re\frac{(-\mu\psi_c+\nu\Delta\psi_c, \Delta \psi_c^*)}{(\psi_c, \Delta \psi_c^*)}\ne 0.
\bee
Then  for any  $\epsilon \ne 0$ sufficiently small,  there exists a unique  element \be (\omega, \sigma, \phi) \in (0,\infty)\times (0, \infty)\times  PC^{2+2\alpha,1+\alpha}_{2\pi},\ee
 dependant smoothly on $\epsilon$,  in a neighborhood of $(\omega_c,1, 0)$  so that
\bbe
&&\psi=\epsilon (e^{-\i s} \psi_c+e^{\i s} \bar\psi_c + \phi),  \label{solu}
\\
&& (\mu, \nu) = \sigma (\mu_c, \nu_c)\nonumber
\bee  solves (\ref{oo2}) and (\ref{aabb}).
\end{Theorem}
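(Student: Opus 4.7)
The plan is a Lyapunov--Schmidt reduction of (\ref{oo2}) and (\ref{aabb}) followed by two successive applications of the implicit function theorem. I would substitute the ansatz $\psi=\epsilon\Psi$ with $\Psi:=e^{-\i s}\psi_c+e^{\i s}\bar\psi_c+\phi$ and $(\mu,\nu)=\sigma(\mu_c,\nu_c)$ into (\ref{aabb}) and, after dividing by $\epsilon$, obtain
\begin{align*}
\mathcal F(\omega,\sigma,\phi,\epsilon)\,:=\,\mathcal L_{\omega,\sigma}\Psi+\epsilon\,\Delta^{-1}J(\Psi,\Delta\Psi)=0,
\end{align*}
where $\mathcal L_{\omega,\sigma}:=-\omega\p_s+\sigma\nu_c\Delta-\sigma\mu_c+\Delta^{-1}L$ and the critical operator $\mathcal L_c:=\mathcal L_{\omega_c,1}$ annihilates $e^{-\i s}\psi_c$ and $e^{\i s}\bar\psi_c$, so $(\omega_c,1,0,0)$ is a base point with $\mathcal F=0$. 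I would then decompose the equation as $P\mathcal F=0$ (range equation) and $(I-P)\mathcal F=0$ (bifurcation equation).

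For the range equation, the Fr\'echet derivative $\p_\phi(P\mathcal F)$ at the base point is $P\mathcal L_c$ restricted to $PC^{2+2\alpha,1+\alpha}_{2\pi}$, which is a bijection onto $PC^{2\alpha,\alpha}_{2\pi}$ by Lemma~\ref{ll}. Granting that $\Psi\mapsto\Delta^{-1}J(\Psi,\Delta\Psi)$ is smooth from $C^{2+2\alpha,1+\alpha}_{2\pi}$ into $C^{2\alpha,\alpha}_{2\pi}$ (the Jacobian $J$ costs one spatial derivative while $\Delta^{-1}$ with (\ref{oo2}) recovers two), the implicit function theorem produces a unique smooth $\phi=\Phi(\omega,\sigma,\epsilon)\in PC^{2+2\alpha,1+\alpha}_{2\pi}$ near $(\omega_c,1,0)$ with $\Phi(\omega_c,1,0)=0$ solving $P\mathcal F=0$.

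The bifurcation equation then reduces, by reality, to the single complex scalar condition $G(\omega,\sigma,\epsilon):=\langle\mathcal F(\omega,\sigma,\Phi,\epsilon),e^{-\i s}\Delta\psi_c^*\rangle=0$. Since $\mathcal L_c\p_\omega\Phi,\ \mathcal L_c\p_\sigma\Phi\in PC^{2\alpha,\alpha}_{2\pi}$ pair trivially with $e^{-\i s}\Delta\psi_c^*$ and $\Phi(\omega_c,1,0)=0$, a direct computation gives
\begin{align*}
\p_\omega G(\omega_c,1,0)&=\langle-\p_s(e^{-\i s}\psi_c+e^{\i s}\bar\psi_c),e^{-\i s}\Delta\psi_c^*\rangle=\i(\psi_c,\Delta\psi_c^*),\\
\p_\sigma G(\omega_c,1,0)&=(-\mu_c\psi_c+\nu_c\Delta\psi_c,\Delta\psi_c^*),
\end{align*}
whose $2\times 2$ real Jacobian for $(\R G,\I G)$ in $(\omega,\sigma)$ has determinant
\begin{align*}
-|(\psi_c,\Delta\psi_c^*)|^2\,\R\frac{(-\mu_c\psi_c+\nu_c\Delta\psi_c,\Delta\psi_c^*)}{(\psi_c,\Delta\psi_c^*)},
\end{align*}
non-zero by the hypotheses (\ref{10a}) and the transversal crossing property, which Theorem~\ref{th2} supplies. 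A second application of the implicit function theorem then yields smooth $(\omega(\epsilon),\sigma(\epsilon))$ near $(\omega_c,1)$, and composition gives the solution (\ref{solu}).

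The principal technical obstacle is the $C^1$ (indeed $C^\infty$) regularity of the quadratic nonlinearity $\Psi\mapsto\Delta^{-1}J(\Psi,\Delta\Psi)$ in the parabolic H\"older category, compatibly with (\ref{oo2}); this is where one pays the price of working in H\"older rather than Hilbert scales, but it reduces to standard product and Schauder estimates. Beyond this, the argument is essentially finite-dimensional linear algebra applied to the pairing with $\psi_c^*$, controlled by Lemma~\ref{ll} and Theorem~\ref{th2}.
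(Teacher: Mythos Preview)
Your argument is correct and rests on the same two ingredients as the paper's proof: the invertibility of $\mathcal L_c$ on the complement $PC^{2+2\alpha,1+\alpha}_{2\pi}$ (Lemma~\ref{ll}) and the nondegeneracy of the pairing data $(\psi_c,\Delta\psi_c^*)$ and $\Re\frac{(-\mu_c\psi_c+\nu_c\Delta\psi_c,\Delta\psi_c^*)}{(\psi_c,\Delta\psi_c^*)}$, leading to the same bifurcation identities you compute. The organization, however, differs: the paper does \emph{not} apply the implicit function theorem in two stages but instead rewrites the full system for $(\sigma-1,\omega-\omega_c,\phi)$ as a single fixed-point equation $F_\epsilon$ on a small ball in $\mathbb R\times\mathbb R\times PC^{2+2\alpha,1+\alpha}_{2\pi}$ and verifies the contraction property directly via H\"older and Schauder estimates. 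Your sequential Lyapunov--Schmidt/IFT route is slightly more abstract but has the advantage that smooth dependence on $\epsilon$ comes for free, whereas the paper's contraction argument yields existence and uniqueness and leaves smoothness implicit. One small remark: you need not invoke Theorem~\ref{th2} here, since the transversal crossing condition is already part of the hypotheses of Theorem~\ref{th3}; Theorem~\ref{th2} is used only to feed those hypotheses when assembling Theorem~\ref{th1}.
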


Hopf bifurcation theorem under sufficient conditions  has been well studied. For example, Crandall and Rabinowitz \cite[Theorem 1.11]{C} and Kielh\"ofer \cite[Theorem I.8.2]{KK} considered    general nonlinear dynamic systems and provided sufficient conditions for the occurrence of Hopf bifurcation. Amongst these conditions, the principal sufficient ones are eigenfunction simplicity and eigenvalue transversal crossing conditions. Independently of \cite{JS}, Similar    Hopf bifurcation analysis on Navier-Stokes equations was also   given by Iudovich \cite{Iu1,Iu2} and Sattinger \cite{Sat}.
Nevertheless, the Hopf bifurcation analysis on a Navier-Stokes fluid motion model studied by Joseph and Sattinger \cite{JS} is most suitable to the present study.  However, strictly speaking, the present fluid motion equation and that of  \cite{JS} have several differences with respect to  boundary conditions,  energy dissipations and equation  formulations (vorticity formulation in the present and  velocity formulation in \cite{JS}). Thus for the completion of rigorous analysis, we prefer to  provide a complete proof  for  Theorem \ref{th3} by  developing the analysis  from Joseph and Sattinger \cite{JS} via the use of  the solution perturbation form (\ref{solu}). This perturbation can also  be found in Rabinowitz \cite{R} for a steady-state bifurcation
 problem. In addition, our proof is straightforward and is obtained simply from setting up a contraction mapping, which is an immediate consequence of the equation nonlinearity  and the eigenfunction simplicity and eigenvalue transversal crossing  conditions.

\begin{proof}
We adopt the notation
\be \varphi_c  \triangleq e^{-\i s} \psi_c,\,\,\,\, \bar \varphi_c \triangleq e^{\i s} \bar\psi_c,   \,\,\, \phi_c\triangleq \varphi_c+ \bar \varphi_c.
\ee

On the substation of $\psi =\epsilon (\phi_c+\phi)$ into (\ref{aabb}) and with division by $\epsilon$ to the resultant equation, we have
\bbe 
0&=& \frac1\epsilon \left( (-\omega_c \p_s  -\mu_c+ \nu_c \Delta   +\Delta^{-1}L)\psi- (\omega-\omega_c) \p_s\psi \right.\nonumber
\\
&& \left.-(\mu-\mu_c)  \psi+ (\nu-\nu_c) \Delta\psi + \Delta^{-1}J(\psi,\Delta \psi)\right)\nonumber
\\
&=& (-\omega_c\p_s  -\mu_c +\nu_c\Delta  + \Delta^{-1}L)\phi -(\omega-\omega_c)\p_s (\phi_c+\phi)\label{aa1}
\\
&&+  (\sigma-1)(-\mu_c +\nu_c\Delta)(\phi_c+\phi)+ \epsilon\Delta^{-1} J(\phi_c+\phi, \Delta(\phi_c+\phi)).\nonumber
\bee
Since $P\phi=\phi$ and $\phi$ is $2\pi$ periodic, we have
\bbe \langle  \p_s\phi, \Delta \varphi_c^*\rangle =\frac1{2\pi}\int^{2\pi}_0 (\p_s \phi, e^{-\i s}\Delta\psi_c^*)ds=0,\label{c1}
\bee
\bbe
\langle  \p_s \phi_c, \Delta\varphi_c^*\rangle =\frac1{2\pi}\int^{2\pi}_0\!\!\!\! (\p_s ( e^{-\i s}\psi_c+ e^{\i s}\bar\psi_c), e^{-\i s}\Delta\psi_c^*)ds=-\i(\psi_c,\Delta\psi_c^*),\label{c2}
\bee
\begin{align}
\langle  -\mu_c\phi_c+\nu_c\Delta \phi_c, \Delta\varphi_c^*\rangle &=\frac1{2\pi}\int^{2\pi}_0\!\!\!\! ((-\mu_c+\nu_c\Delta ) ( e^{-\i s}\psi_c+ e^{\i s}\bar\psi_c), e^{-\i s}\Delta\psi_c^*)ds
\nonumber
\\
&=(-\mu_c\psi_c+\nu_c\Delta \psi_c,\Delta\psi_c^*).\label{c3}
\end{align}
Therefore, taking the inner product of  (\ref{aa1}) with $\Delta \varphi_c^*$, we have
\begin{align*} 0&= \i (\omega-\omega_c)(\psi_c,\Delta\psi_c^*)+(\sigma-1)(-\mu_c\psi_c+\nu_c\Delta\psi_c,\Delta \psi_c^*)
\\
&+ (\sigma-1)\langle -\mu_c\phi+\nu_c\Delta\phi,\Delta \varphi_c^*\rangle + \epsilon \langle \Delta ^{-1}J(\phi_c +\phi,\Delta (\phi_c+\phi)),\Delta\varphi_c^*\rangle .
\end{align*}
That is, after the division of the previous equation by $(\psi_c,\Delta\psi_c^*)$ due to   (\ref{10a}),
\begin{align} 0&= \i (\omega-\omega_c)+(\sigma-1)\frac{(-\mu_c\psi_c+\nu_c\Delta\psi_c,\Delta \psi_c^*)}{(\psi_c,\Delta\psi_c^*)}\label{aa101}
\\
&+ (\sigma-1)\frac{\langle -\mu_c\phi+\nu_c\Delta\phi,\Delta \varphi_c^*\rangle }{(\psi_c,\Delta\psi_c^*)}+ \epsilon \frac{\langle \Delta ^{-1}J(\phi_c +\phi,\Delta (\phi_c+\phi)),\Delta \varphi_c^*\rangle }{(\psi_c,\Delta\psi_c^*)}.\nonumber
\end{align}
Separating the  real and imaginary parts of (\ref{aa101}), we have
\bbe
\sigma-1=f_\epsilon
\bee
with
\bbe
f_\epsilon\triangleq  -\frac{(\sigma\!-\!1)\Re\displaystyle\frac{\langle -\mu_c\phi\!+\!\nu_c\Delta\phi,\Delta \psi_c^*\rangle }{(\psi_c,\Delta\psi_c^*)}\!+\!\epsilon \Re\frac{\langle \Delta ^{-1}J(\phi_c\!+\!\phi,\Delta (\phi_c\!+\!\phi)),\Delta \varphi_c^*\rangle }{(\psi_c,\Delta\psi_c^*)}}
{\Re\displaystyle\frac{(-\mu_c+\nu_c\Delta\psi_c,\Delta \psi_c^*)}{(\psi_c,\Delta\psi_c^*)}},\label{d1d}
\bee
\begin{align} \omega-\omega_c&=-f_\epsilon\Im\frac{(-\mu_c\psi_c+\nu_c\Delta\psi_c,\Delta \psi_c^*)}{(\psi_c,\Delta\psi_c^*)}- (\sigma-1)\Im\frac{\langle -\mu_c\phi+\nu_c\Delta\phi,\Delta \varphi_c^*\rangle }{(\psi_c,\Delta\psi_c^*)}\nonumber
\\
&- \epsilon\Im \frac{\langle \Delta ^{-1}J(\phi_c +\phi,\Delta (\phi_c+\phi)),\Delta \varphi_c^*\rangle }{(\psi_c,\Delta\psi_c^*)}\triangleq g_\epsilon.\label{aa100}
\end{align}
By using the mappings $f_\epsilon$ and $g_\epsilon$, we rewrite
(\ref{aa1}) as
\bbe\lefteqn{ -(-\omega_c\p_s -\mu_c +\nu_c\Delta  \!+ \!\Delta^{-1}L)\phi}\nonumber
\\&=& -(\omega\!-\!\omega_c)\p_s \phi+ (\sigma-1)(-\mu_c+\nu_c\Delta)\phi
\!-\!g_\epsilon \p_s \phi_c\!\nonumber
\\
&&+ \!f_\epsilon (-\mu_c+\nu_c\Delta)\phi_c\!+ \! \epsilon\Delta^{-1} J(\phi_c\!+ \!\phi, \Delta(\phi_c\!+ \!\phi)).\label{bb2}
\bee
Let $\Phi_\epsilon$ denote the right-hand side of (\ref{bb2}). To check $\Phi_\epsilon$  to be orthogonal to $\Delta \varphi_c^*$ and $\Delta \overline{\varphi_c^*}$, we see that , by (\ref{c1})-(\ref{c3}),
\bbe
\langle \Phi_\epsilon,\Delta \varphi_c^*\rangle
&=& (\sigma-1)\langle -\mu_c \phi+\nu_c\Delta\phi,\Delta \varphi_c^*\rangle +\i g_\epsilon  (\psi_c,\Delta \psi_c^*)\nonumber
\\&&+ f_\epsilon (-\mu_c\psi_c+\nu_c\Delta\psi_c,\Delta \psi_c^*)\nonumber
\\
&&+\epsilon\langle \Delta^{-1} J(\phi_c\!+ \!\phi, \Delta(\phi_c\!+ \!\phi)), \Delta \varphi_c^*\rangle ,\label{bb1}
\bee
or
\bbe
\frac{\langle \Phi_\epsilon,\Delta \varphi_c^*\rangle }{( \psi_c,\Delta \psi_c^*)}
&=& (\sigma-1)\frac{\langle \mu_c\phi+\nu_c\Delta\phi,\Delta \varphi_c^*\rangle }{( \psi_c,\Delta \psi_c^*)}+\i g_\epsilon\nonumber
\\
&& + f_\epsilon \frac{(-\mu_c\psi_c+\nu_c\Delta\psi_c,\Delta \psi_c^*)}{( \psi_c,\Delta \psi_c^*)}\nonumber
\\
&&+\epsilon\frac{\langle \Delta^{-1} J(\phi_c\!+ \!\phi, \Delta(\phi_c\!+ \!\phi)), \Delta \varphi_c^*\rangle }{{( \psi_c,\Delta \psi_c^*)}}.\label{bbb1}
\bee
This together with (\ref{d1d}) and (\ref{aa100})  gives
\be \langle \Phi_\epsilon,\Delta \varphi_c^*\rangle =0. \ee
Arguing in the same way, we have
\be \langle \Phi_\epsilon,\Delta \overline{\varphi_c^*}\rangle =0. \ee
Hence we have $P \Phi_\epsilon =\Phi_\epsilon$. Applying Lemma \ref{ll} to  (\ref{aa1}), we have
\bbe \phi &=& (-\omega_c\p_s -\mu_c +\nu_c\Delta  + \Delta^{-1}L)^{-1} \Phi_\epsilon\triangleq h_\epsilon.  \label{1q}
\bee

Thus (\ref{aa1}) is rewritten into the fixed point problem
\be
(\sigma-1,\omega-\omega_c,\phi ) &=& F_\epsilon (\sigma-1,\omega-\omega_c,\phi)
\\ &\triangleq& \Big(f_\epsilon(\sigma-1,\phi),g_\epsilon(\sigma-1,\phi),h_\epsilon(\sigma-1,\omega-\omega_\epsilon,\phi)\Big).
\ee

Now it remains to check that $F_\epsilon $ is a contraction mapping on the complete matric space
\be X &=& \Big\{ (\sigma-1,\omega-\omega_c,\phi )\in (-\infty,\infty) \times (-\infty,\infty) \times PC^{2+2\alpha, 1+\alpha}_{2\pi} \Big|
\\
&&\,\, \|(\sigma-1,\omega-\omega_c,\phi )\|_{X} = |\sigma-1|+|\omega-\omega_c|+ \|\phi\|_{2+2\alpha,1+\alpha} \le M \Big\}
\ee
for a constant $M$ to be determined.

To check the injection property $F_\epsilon: X \mapsto X$, we use the H\"older estimate and the boundedness of the operator $\Delta^{-1}\nabla$ to produce
\begin{align*} |f_\epsilon| =& \left|\frac{(\sigma\!-\!1)\Re\displaystyle\frac{\langle -\mu_c\phi\!+\!\nu_c\Delta\phi,\Delta \psi_c^*\rangle }{(\psi_c,\Delta\psi_c^*)}\!+\!\epsilon \Re\frac{\langle \Delta ^{-1}J(\phi_c\!+\!\phi,\Delta (\phi_c\!+\!\phi)),\Delta \varphi_c^*\rangle }{(\psi_c,\Delta\psi_c^*)}}
{\Re\displaystyle\frac{(-\mu_c+\nu_c\Delta\psi_c,\Delta \psi_c^*)}{(\psi_c,\Delta\psi_c^*)}}\right|
\\
\leq &C\Big( |\sigma-1 | \|\Delta \phi\|_{L_2} +\epsilon \|\Delta ^{-1}J(\phi_c,\Delta \phi_c)\|_{L_2}
\\
&+\epsilon\|\nabla \phi\|_{L_2}+ \epsilon\|\Delta \phi\|_{L_2}+  \epsilon\|\nabla \phi\|_{L_\infty}\|\Delta \phi\|_{L_2}
\Big)
\\
\leq& C(M^2+ \epsilon + \epsilon M + \epsilon M^2),
\end{align*}

\begin{align*}
|g_\epsilon| &\le
\Bigg|f_\epsilon\Im\frac{(-\mu_c\psi_c+\nu_c\Delta\psi_c,\Delta \psi_c^*)}{(\psi_c,\Delta\psi_c^*)}+ (\sigma-1)\Im\frac{\langle -\mu_c\phi+\nu_c\Delta\phi,\Delta \varphi_c^*\rangle }{(\psi_c,\Delta\psi_c^*)}
\\
&+ \epsilon\Im \frac{\langle \Delta ^{-1}J(\phi_c +\phi,\Delta (\phi_c+\phi)),\Delta \varphi_c^*\rangle }{(\psi_c,\Delta\psi_c^*)}\Bigg|
\\
&\leq C ( |f_\epsilon| + |\sigma-1 | \|\Delta \phi\|_{L_2} + \epsilon \|\Delta ^{-1}J(\phi_c,\Delta \phi_c)\|_{L_2}
\\&+ \epsilon\|\nabla \phi\|_{L_2}+ \epsilon\|\Delta \phi\|_{L_2}+  \epsilon\|\nabla \phi\|_{L_\infty}\|\Delta \phi\|_{L_2})
\\
&\leq C( M^2+\epsilon + \epsilon M + \epsilon M^2),
\end{align*}
and, by Lemma \ref{ll},
\be \| h_\epsilon\|_{2+2\alpha,1+\alpha} &\le & C \|\Phi_\epsilon\|_{2\alpha,\alpha}
\\
&\leq & C \|(\omega-\omega_c)\p_s \phi+(\sigma-1)(-\mu_c\phi+\nu_c\Delta\phi)
+g_\epsilon\p_s\phi_c
\\ &&+ f_\epsilon(-\mu_c\phi_c+\nu_c\Delta\phi_c)
+ \epsilon \Delta^{-1} J(\phi_c+\phi, \Delta (\phi_c+\phi))\|_{2\alpha,\alpha}
\\
&\leq & C \Big( |\omega-\omega_c|\,\| \phi\|_{2+2\alpha,1+\alpha} +|\sigma-1|\,\| \phi\|_{2+2\alpha,1+\alpha}
+|g_\epsilon|
+ |f_\epsilon|
\\ &&
+ \epsilon \|\Delta^{-1} J(\phi_c, \Delta \phi_c)\|_{2\alpha,\alpha}+ \epsilon\| \nabla\phi\|_{2\alpha,\alpha} +\epsilon\| \Delta \phi\|_{2\alpha,\alpha}
\\&&
+\epsilon\| \nabla\phi\|_{2\alpha,\alpha} \| \Delta \phi\|_{2\alpha,\alpha} \Big)
\\
&\le& C(M^2+|g_\epsilon|+ |f_\epsilon| + \epsilon +\epsilon M +\epsilon M^2)
\\
&\leq& C(M^2+ \epsilon + \epsilon M + \epsilon M^2).
\ee
Collecting terms
, we have
\bbe \|F_\epsilon\|_{X} &=& |f_\epsilon|+ |g_\epsilon|+ \| h_\epsilon \|_{2+2\alpha,1+\alpha}
 \le C(M^2+ \epsilon + \epsilon M + \epsilon M^2). \label{Xz}
\bee
For the constant $C$ in  (\ref{Xz}), we  take  the parameters $\epsilon$  and $M$ sufficiently small so that
\be  2 C\epsilon \le M \,\,\mbox{ and }
C(M +\epsilon  + \epsilon M) \leq \frac12.
\ee
This gives the injection property
\be
\|F_\epsilon (\omega-\omega_c,\sigma-1,\phi)\|_{X} \le M \,\,\, \mbox{ for } (\omega-\omega_c,\sigma-1,\phi)\in X.
\ee
Similarly, we have the contraction property
\be
\lefteqn{\|F_\epsilon(\omega'-\omega_c,\sigma'-1,\phi')- F_\epsilon(\omega''-\omega_c,\sigma''-1,\phi'')\|_{X} }
\\
&\le& \frac12\|(\omega'-\omega_c,\sigma'-1,\phi')- (\omega''-\omega_c,\sigma''-1,\phi'')\|_{X}
\ee
for any $(\omega'-\omega_c,\sigma'-1,\phi'),  (\omega''-\omega_c,\sigma''-1,\phi'')\in X$, provided that $\epsilon$ and $M$ are  sufficiently small.

Therefore, by the Banach contraction mapping principle,  the operator $F_\epsilon$ has a unique fixed point in $X$. We thus have the  desired result.

The proof of Theorem \ref{th3} is complete.
\end{proof}

\

\noindent \textbf{Acknowledgement.} This research is supported by The Shenzhen Natural Science Fund of China (the stable Support Plan Program No. 20220805175116001).

 \

\end{document}